\documentclass[10pt]{article}
\usepackage{graphicx,multicol}

\usepackage[a4paper, left=1.0in, right=1.0in]{geometry} 

\usepackage[latin1]{inputenc}

\usepackage{color}
\usepackage{amsthm,amsmath}
\usepackage{amssymb}
\usepackage{amsfonts}
\usepackage{hyperref}
\hypersetup{colorlinks,allcolors=black}
\usepackage{graphicx}
\usepackage{tabularx}
\usepackage{orcidlink}
\usepackage{lscape}
\usepackage{pdflscape}
\usepackage{adjustbox} 
\usepackage{array} 

\newtheorem{theorem}{Theorem}
\newtheorem{definition}{Definition}
\newtheorem{corollary}{Corollary}
\newtheorem{proposition}{Proposition}
\newtheorem{lemma}{Lemma}
\newtheorem{example}{Example}
\newtheorem{remark}{Remark}

\newtheorem{Fact}{Fact}
\newtheorem{prob}{Open Problem}

\date{ }

\author{Atif Ahmad Khan$^{1,}\footnote{Corresponding author}$ \orcidlink{0009-0009-2371-5707}~, Shakir Ali$^{1}$ \orcidlink{0000-0001-5162-7522} , Elif Segah Oztas$^2$ \orcidlink{0000-0003-0032-3400}, Abhishek Kesarwani$^3$ \orcidlink{
		0000-0001-8488-6811}\\
	\small{$^1$Department of Mathematics, Faculty of Science, }\\
	\small{Aligarh Muslim University, Aligarh 202002, India}\\
	\small{atifkhanalig1997@gmail.com, shakir.ali.mm@amu.ac.in}\\
	\small{$^2$Department of Mathematics, Karamanoglu Mehmetbey  University, Turkiye}\\
	\small{elifsegahoztas@gmail.com}\\ 
		\small{$^3$ Department of Mathematics \& Computing, Indian Institute of Information Technology Vadodara}, \\\small{International Campus Diu, Daman \& Diu 362520, India }\\
	\small{abhishek$\_$kesarwani@iiitvadodara.ac.in}\\ 
	\small{ }}

\begin{document}
\title{\textbf{MDS matrices from skew polynomials with automorphisms and derivations}}
\maketitle
\begin{abstract}
Maximum Distance Separable (MDS) matrices play a central role in coding theory and symmetric-key cryptography due to their optimal diffusion properties. In this paper, we present a construction of MDS matrices using skew polynomial rings \( \mathbb{F}_q[X;\theta,\delta] \), where \( \theta \) is an automorphism and \( \delta \) is a \( \theta\)-derivation on \( \mathbb{F}_q \). We introduce the notion of \( \delta_{\theta} \)-circulant matrices and study their structural properties. Necessary and sufficient conditions are derived under which these matrices are involutory and satisfy the MDS property. The resulting $\delta_\theta$-circulant matrix can be viewed as a  generalization of classical constructions obtained in the absence of $\theta$-derivations.  One of the main contribution of this work is the construction of quasi recursive MDS matrices. In the setting of the skew polynomial ring $\mathbb{F}_q[X;\theta]$, we construct quasi recursive MDS matrices associated with companion matrices.
 These matrices are shown to be involutory, yielding a strict improvement over the quasi-involutory constructions previously reported in the literature. Several illustrative results and examples are also provided.

\end{abstract}

\noindent\textit{Keywords:} MDS matrix, derivation, quasi recursive matrix, involutory matrix  \\
\textit{2020 Mathematics Subject Classification:} 94A60, 15A99, 11T06, 16S36

\section{Introduction}
The notions of confusion and diffusion in the design of encryption systems were first introduced by Shannon in his seminal paper ``Communication Theory of Secrecy Systems''~\cite{shannon1949communication}. The purpose of the confusion layer is to obscure the relationship between the secret key and the ciphertext, whereas the diffusion layer aims to conceal the relationship between the plaintext and the ciphertext. When employed in an iterated block cipher, alternating layers of confusion and diffusion ensure that every bit of the plaintext and the secret key has a complex and non-linear influence on every bit of the ciphertext. Consequently, the design of the diffusion layer has a direct impact on both the security and the efficiency of cryptographic primitives. In particular, diffusion layers play a crucial role in providing resistance against differential and linear cryptanalysis, as emphasized in~\cite{joan2002design}. Among the quantitative measures of diffusion, the branch number~\cite{joan2002design} is a fundamental parameter for evaluating diffusion strength. Optimal diffusion can be achieved using multipermutations~\cite{schnorr1994black,vaudenay1994need} or through the use of Maximum Distance Separable (MDS) matrices~\cite{joan2002design}.

Due to their strong algebraic structure and favorable implementation properties, circulant and recursive MDS matrices, together with their various generalizations, form some of the most extensively studied families of MDS matrices in cryptography. Such matrices have been adopted in prominent cryptographic designs, including block ciphers such as AES~\cite{joan2002design} and SHARK~\cite{rijmen1996cipher}, as well as lightweight hash functions like PHOTON~\cite{guo2011photon}. A key advantage of these constructions is that they admit elegant algebraic descriptions in terms of the polynomial ring $\mathbb{F}_q[X]$ and its quotient structures. For further background and developments, we refer the reader to~\cite{gupta2017towards,cauchois2019circulant,cauchois2016direct} and the references therein. Several structured families of MDS matrices have been proposed in the literature with the objective of achieving strong diffusion while maintaining efficient implementations. Among these, circulant and recursive MDS matrices have received particular attention.

Circulant MDS matrices have been widely investigated due to their simple structure and efficient realizations in diffusion layers; see, for example,~\cite{adhiguna2022orthogonal,ali2024circulant,ali2024xor,cauchois2019circulant,gupta2015cryptographically}. In this direction, Gupta et al.~\cite{gupta2015cryptographically} established several non-existence results for circulant matrices of even order that are simultaneously MDS and involutory. Subsequently, Cauchois et al.~\cite{cauchois2019circulant} introduced an algebraic framework for circulant MDS matrices based on polynomial representations arising from skew polynomial rings. By generalizing classical circulant matrices through an automorphism $\theta$, they defined $\theta$-circulant  matrices and showed that this generalized structure admits  involutory MDS matrices of even order.

Another important line of research concerns recursive MDS matrices, which are particularly attractive for resource-constrained implementations. Such matrices have been employed in cryptographic primitives including the PHOTON hash family~\cite{guo2011photon} and the LED block cipher~\cite{guo2011led}. A recursive MDS matrix is one that can be expressed as a power of a companion matrix associated with a polynomial $g(X)\in\mathbb{F}_q[X]$, enabling highly efficient realizations using linear feedback shift registers and minimal memory requirements in software~\cite{guo2011photon,guo2011led,gupta2013mds}. Following the work of Guo et al., numerous constructions have been proposed using both ad-hoc search techniques and coding-theoretic approaches~\cite{augot2013exhaustive,augot2014direct,berger2013construction,gupta2013mds,sajadieh2012recursive,wu2012recursive,kesarwani2021recursive}. Notably, Berger~\cite{berger2004gabidulin} derived infinite families of recursive MDS matrices from Gabidulin codes, which also satisfy the MRD property, while Augot and Finiasz~\cite{augot2014direct} later proposed a direct construction based on shortened BCH codes. In 2016, Cauchois et al.~\cite{cauchois2016direct} investigated recursive and quasi recursive MDS matrices within the skew polynomial framework, and obtaining quasi-involutory MDS matrices over a restricted class of finite fields.


Our constructions are carried out in the setting of skew polynomial rings, which naturally arise from the presence of 
an automorphism and a $\theta$-derivation on the underlying coefficient ring. 
This class of noncommutative polynomial rings was first introduced by Ore \cite{ore1933theory}. 
Therefore, before proceeding further, we briefly review the notion of derivations. Jacobson, in his classical book ``Structure of Rings" \cite{Jacobson}, introduced the concept of $(\theta, \gamma)$-derivation. Specifically, a map $\delta: \mathfrak{R} \to \mathfrak{R}$ is called a $(\theta, \gamma)$-derivation on $\mathfrak{R}$ if it satisfies  $
\delta(ab) = \theta(a)\delta(b) + \delta(a)\gamma(b),$ 
for all $a, b \in \mathfrak{R}$, where $\theta$ and $\gamma$ are endomorphisms of $\mathfrak{R}$. If we take $\gamma$ as an identity map on $\mathfrak{R}$, then $\delta$ is called $\theta$-derivation.  
The theory of automorphisms and derivations plays a crucial role in both pure and applied mathematics (cf.; \cite{nurcan1987sigma}, \cite{ashraf2006derivations}, and  \cite{ Nowicki} ). The second author, in collaboration with  others, has studied the structure of rings and algebras through derivations (see  \cite{ali2011generalized},  \cite{ali2024certain} for more details). Recently, several mathematicians have applied the theory of derivations in coding theory and cryptography (viz.; \cite{boucher2014linear},  \cite{boulagouaz2013sigma},  and \cite{Ou-azzou2024Onthe}). Precisely,  Boucher and Ulmer \cite{boucher2014linear} initiate the use of theory of derivations. By proposing a new evaluation of skew polynomials, they developed new classes of evaluation codes. Additionally, they generalized Gabidulin's decoding algorithm for the rank metric and constructed families of codes that satisfy the MDS and Maximum Rank Distance (MRD) properties.


Inspired by the above-mentioned studies on circulant and recursive MDS matrices, we develop our constructions using skew polynomial rings. In this way, skew polynomial rings provide a single algebraic setting in which both types of matrices can be studied. Specifically, we work in the non-commutative ring $\mathbb{F}_q[X;\theta,\delta]$, where $\theta$ is an automorphism of the finite field $\mathbb{F}_q$ and $\delta$ is a $\theta$-derivation. 
The multiplication is determined by the rule $Xa=\theta(a)X+\delta(a)$. When restricted to finite fields, every $\theta$-derivation admits a simple characterization of the form $\delta(a)=\beta(\theta(a)-a)$ for some $\beta\in\mathbb{F}_q$~\cite[Chapter~8, Theorem~3.1]{Cohn1985free}, which allows explicit control over the resulting constructions. Motivated by earlier studies of \( (\theta,\delta) \)-codes and \( (\theta,\delta) \)-cyclic matrices~\cite{boulagouaz2013sigma,Ou-azzou2024Onthe}. We generalize existing \( \theta \)-circulant matrix constructions by incorporating a \( \theta \)-derivation. This extension gives rise to a new class of structured matrices, referred to as \( \delta_{\theta} \)-circulant matrices. These matrices are analyzed via their associated skew polynomials of the form
\[
h\langle X\rangle = X^{m}-1+\sum_{i=0}^{m-1} h_i X^i \in \mathbb{F}_q[X;\theta,\delta],
\]
under the condition \( m = 2^{t} \) for some \( t \in \mathbb{N} \). Explicit criteria on the generating polynomial are derived to ensure that the resulting matrices are involutory. Although involutory MDS matrices have already been observed within the skew polynomial framework. In particular, Cauchois et al.\ \cite{cauchois2019circulant} investigated $\theta$-circulant matrices and exhibited an example of order $4$ that is simultaneously involutory and MDS. No such example exists in the classical circulant setting. Motivated by this, we use \( \theta \)-derivation into the construction in order to enlarge the class of $\theta$-circulant matrices.

A central result of this paper is the analysis of quasi recursive MDS matrices derived from the skew polynomial framework. In contrast to previous constructions \cite[Theorem 2]{cauchois2016direct}, which were confined to specific classes of finite fields and produced quasi-involutory MDS matrices of even order, our approach applies to a broader field setting. We prove that, for an appropriate choice of the automorphism \( \theta \), the resulting quasi recursive construction yields  involutory MDS matrices of even order. This significantly strengthens existing results, as involutory MDS matrices are highly desirable in cryptographic applications due to their reduced implementation cost.

Finally, beyond these primary constructions, we derive additional MDS matrices from existing ones via suitable algebraic transformations, further illustrating the flexibility of the proposed framework. In particular, we use Hadamard product as an effective tool for generating new MDS matrices.

The paper is structured as follows: Section 2 provides an overview of $\theta$-derivations, skew polynomial rings, and MDS matrices. Section~3 focuses on the construction and analysis of $\delta_{\theta}$-circulant MDS matrices, including their involutory property. We conclude this section by formulating an open problem related to the complete characterization of such  $\delta_{\theta}$-circulant MDS matrices.
 In Section~4, we derive results showing that every quasi recursive MDS matrix yields further quasi recursive MDS matrices. Section 5 explores the construction of quasi recursive MDS matrices using companion matrices. Additionally, we provide some examples of MDS matrices to illustrate their properties and construction methods.

\section{\textbf{Notations and preliminaries}}

In this section, we provide some basic definitions and results that will be used to deduce further results. Now, we start with the definition of $\theta$-derivation over finite field. Let $\mathbb{F}_q$ be a finite field and $\theta$ be an automorphism of $\mathbb{F}_q$. A $\theta$-derivation is a map $\delta:\mathbb{F}_q\longrightarrow \mathbb{F}_q$ such that for all $a$, $b \in \mathbb{F}_q$,

$$\delta(a+b)=\delta(a)+\delta(b)$$

and
\begin{eqnarray*}
	\delta(ab)&=&\delta(a)b+\theta(a)\delta(b).
\end{eqnarray*}
\newpage
We use the following notations throughout the paper:
\begin{table}[h]
	\centering
	\begin{tabular}{c p{10cm}}
		\hline
		$n,~m,~l,~k,~d,~t$& positive integers.\\
		\hline
		$\mathbb{F}_{q}$ &  the finite field of characteristic 2 with $q$ elements, where $q$ is a  power of 2. \\
		\hline
		$\mathbb{F}^n_{q}$&  the linear space over $\mathbb{F}_{q}$ of dimension $n$.\\
		\hline
		$M(m,\mathbb{F}_{q})$ & the set of all square  matrices of order $m$  over   $\mathbb{F}_{q}$. \\
		\hline
		$GL(m,\mathbb{F}_q)$&  the general linear group consisting of all $m \times m$ invertible matrices over $\mathbb{F}_q$. \\
		\hline
		$P(m,~\mathbb{F}_{2})$ & the set of all permutation  matrices of order $m$  over   $\mathbb{F}_{2}$. \\ 
		\hline
		$\mathbb{F}_q[X;\theta, \delta]$& the skew polynomial ring, where $\theta:\mathbb{F}_q\longrightarrow \mathbb{F}_q$ is an automorphism and $\delta:\mathbb{F}_q\longrightarrow \mathbb{F}_q$ is a $\theta$-derivation. \\
		\hline
		$Q\langle X \rangle$ & a skew polynomial over $\mathbb{F}_{q}$. \\
		\hline
		\hline
	\end{tabular}
\end{table}\\
\begin{example}
	Let $\alpha$ be a root of the polynomial $1+X+X^4$ over $\mathbb{F}_2$. Then
	$\mathbb{F}_{2^4}=\mathbb{F}_2(\alpha)$ is a finite field of order $16$. Consider the maps
	$\theta,\delta:\mathbb{F}_{2^4}\to\mathbb{F}_{2^4}$ given by
	\[
	\theta(a)=a^2
	\quad\text{and}\quad
	\delta(a)=a^2-a.
	\]
	Then, $\delta$ is a $\theta$-derivation on $\mathbb{F}_{2^4}$.
\end{example}

The following fact illustrates that the automorphism \( \theta \) and the associated \( \theta \)-derivation \( \delta \) do not commute in general.

\begin{Fact}\label{5p1}  
	Let \(\delta: \mathbb{F}_{q} \longrightarrow \mathbb{F}_{q}\) be a \(\theta\)-derivation. Next, let \(\theta: \mathbb{F}_{q} \longrightarrow \mathbb{F}_{q}\) be a non-identity automorphism such that \(\delta(a) = \beta(\theta(a) - a)\) for all \(a \in \mathbb{F}_{q}\) with \(\theta(\beta) \neq \beta\). Then, there exists an element \(r \in \mathbb{F}_{q}\) such that  
	\[
	(\delta \circ\theta)(r) \neq (\theta \circ  \delta)(r).
	\]
\end{Fact}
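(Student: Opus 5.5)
Compute both compositions directly and compare them. Since $\theta$ is an automorphism and $\delta(a)=\beta(\theta(a)-a)$, for every $r\in\mathbb{F}_q$ we have
\[
(\delta\circ\theta)(r)=\beta\bigl(\theta^2(r)-\theta(r)\bigr),
\qquad
(\theta\circ\delta)(r)=\theta(\beta)\bigl(\theta^2(r)-\theta(r)\bigr),
\]
using additivity and multiplicativity of $\theta$ in the second equation. Subtracting gives
\[
(\delta\circ\theta)(r)-(\theta\circ\delta)(r)=\bigl(\beta-\theta(\beta)\bigr)\,\theta\bigl(\theta(r)-r\bigr).
\]
So it suffices to find one $r$ for which the right-hand side is nonzero.

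The first factor $\beta-\theta(\beta)$ is nonzero by the hypothesis $\theta(\beta)\neq\beta$. For the second factor, $\theta$ is injective, so $\theta(\theta(r)-r)\neq 0$ exactly when $\theta(r)\neq r$. Because $\theta$ is not the identity, some $r\in\mathbb{F}_q$ has $\theta(r)\neq r$. One convenient choice is $r=\beta$ itself, which works directly by the hypothesis.

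With such an $r$, the product of two nonzero field elements is nonzero, because $\mathbb{F}_q$ has no zero divisors. Hence $(\delta\circ\theta)(r)\neq(\theta\circ\delta)(r)$.

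There is no real obstacle here; the only point needing care is to expand $\theta(\delta(r))$ correctly as $\theta(\beta)(\theta^2(r)-\theta(r))$. Note that choosing $r=\beta$ makes the assumption that $\theta$ is non-identity redundant, since $\theta(\beta)\neq\beta$ already forces it.
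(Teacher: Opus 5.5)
The paper states this Fact without giving any proof, so there is no argument of its own to compare yours against. Your direct computation is correct and complete. The identity $(\delta\circ\theta)(r)-(\theta\circ\delta)(r)=\bigl(\beta-\theta(\beta)\bigr)\,\theta\bigl(\theta(r)-r\bigr)$ holds, and taking $r=\beta$ makes both factors nonzero. You are also right that the non-identity hypothesis on $\theta$ is then redundant.

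Your identity gives the converse as well: $\delta\circ\theta=\theta\circ\delta$ holds exactly when $\theta=\mathrm{id}$ (so that $\delta=0$) or $\beta\in(\mathbb{F}_q)^\theta$. So the paper's standing assumption $\delta\circ\theta=\theta\circ\delta$ in Sections 3 and 4 amounts to the case $\beta\in(\mathbb{F}_q)^\theta$ discussed at the end of Section 3.
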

\begin{example}
	Let $\alpha$ be a root of the polynomial $1+X+X^4$ over $\mathbb{F}_2$. Then
	$\mathbb{F}_{2^4}=\mathbb{F}_2(\alpha)$ is a finite field of order $16$. Consider the maps
	$\theta,\delta:\mathbb{F}_{2^4}\to\mathbb{F}_{2^4}$ given by
	\[
	\theta(a)=a^2
	\quad\text{and}\quad
	\delta(a)=\alpha(a^2-a).
	\]
	Then, $\delta$ is a $\theta$-derivation on $\mathbb{F}_{2^4}$.
\end{example}
\begin{definition}
	Let $\theta:\mathbb{F}_q\to \mathbb{F}_q$ be a field automorphism. The \emph{fixed field} of $\mathbb{F}_q$ under $\theta$ is defined by
	\[
	(\mathbb{F}_q)^{\theta}=\{\,a\in \mathbb{F}_q:\theta(a)=a\,\}.
	\]
\end{definition}

\begin{definition}\label{5d1}  
	Let \(\theta:\mathbb{F}_{q} \longrightarrow \mathbb{F}_{q}\) be an automorphism and \(\delta: \mathbb{F}_{q} \longrightarrow \mathbb{F}_{q}\) be a derivation. A skew polynomial ring over a field \(\mathbb{F}_{q}\) with automorphism \(\theta\) and $\theta$-derivation \(\delta\), denoted \(\mathbb{F}_{q}[X; \theta, \delta]\), is defined as the set  
	\[
	\mathbb{F}_{q}[X; \theta, \delta] := \left\{ a_0 + a_1 X + \dots + a_{n-1} X^{n-1} : a_i \in \mathbb{F}_{q}, \, n \in \mathbb{N} \right\},
	\]  
	equipped with the usual addition of polynomials and a multiplication followed by the rule  
	\begin{eqnarray*}
		X * a &=& \theta(a) X + \delta(a) \quad \text{for all } a \in \mathbb{F}_{q},\\
		a*X&=& aX \quad \text{for all } a \in \mathbb{F}_{q}.
	\end{eqnarray*}
	
\end{definition}

In relation to some well known  polynomial rings from the literature, we make the following remarks.
\begin{remark}\label{5Rem1}
	Note that the structure of the skew polynomial ring \(\mathbb{F}_{q}[X; \theta, \delta]\) (cf.~\cite{liu2014kotter}) varies depending on the properties of the automorphism \(\theta\) and the $\theta$-derivation \(\delta\). 
	\begin{enumerate}  
		\item  If \(\delta\) is the zero derivation, that is, \(\delta(a) = 0\) for all \(a \in \mathbb{F}_{q}\), the ring simplifies to \(\mathbb{F}_{q}[X; \theta]\), known as a \textit{twisted polynomial ring}.  
		
		\item  If \(\theta\) is the identity automorphism, that is, \(\theta(a) = a\) for all \(a \in \mathbb{F}_{q}\), the ring \(\mathbb{F}_{q}[X; \delta]\) is referred to as a \textit{differential polynomial ring}.  
		
		\item  If  \(\theta\) is the identity automorphism and \(\delta\) is the zero derivation, the structure reduces to the \textit{ordinary polynomial ring} \(\mathbb{F}_{q}[X]\).  
	\end{enumerate}
\end{remark}

Skew polynomial rings behave quite differently from ordinary polynomial rings, yet they retain the important property of having a division algorithm. In the following proposition, we collect the properties of division algorithm for skew polynomial ring.
\begin{proposition}\label{5pp1}\cite[Proposition 2]{liu2014kotter}
	\( \mathbb{F}_{q}[X; \theta, \delta] \) is a right Euclidean domain.
	
\end{proposition}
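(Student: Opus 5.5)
The plan is to prove right division directly: for any $f,g\in\mathbb{F}_q[X;\theta,\delta]$ with $g\neq 0$, there exist $Q,R$ with $f=Q*g+R$ and either $R=0$ or $\deg R<\deg g$. Here $\deg$ is the usual degree in $X$, and we call the ring a right Euclidean domain when this holds. Two preliminary facts come first. The ring is a domain, and degree is additive: $\deg(f*g)=\deg f+\deg g$. Both follow from a leading-term computation. By induction on $k$, using $X*a=\theta(a)X+\delta(a)$, one shows that
\[
X^k*a=\theta^k(a)X^k+(\text{terms of degree}<k).
\]
Hence for $f=\sum a_iX^i$ of degree $n$ and $g=\sum b_jX^j$ of degree $m$, the product is
\[
f*g=a_n\theta^n(b_m)X^{n+m}+(\text{lower terms}).
\]
Since $\theta$ is an automorphism, $\theta^n(b_m)\neq0$, and since $\mathbb{F}_q$ is a field, the leading coefficient $a_n\theta^n(b_m)$ is nonzero. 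Thus there are no zero divisors, and degrees add.

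Next comes the division step, by strong induction on $\deg f$. If $f=0$ or $\deg f<\deg g$, take $Q=0$ and $R=f$. Otherwise let $n=\deg f\ge m=\deg g$, with leading coefficients $a_n$ and $b_m$. Set
\[
c=a_n\,\theta^{\,n-m}(b_m)^{-1},
\]
which is well defined because $\theta^{n-m}(b_m)\neq0$. By the leading-term formula, $cX^{n-m}*g$ has degree $n$ and leading coefficient $c\,\theta^{n-m}(b_m)=a_n$. So $f_1=f-cX^{n-m}*g$ is either zero or has degree $<n$. The induction hypothesis gives $f_1=Q_1*g+R$, and then $f=(Q_1+cX^{n-m})*g+R$. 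This yields existence; uniqueness, if wanted, follows from additivity of degree, since $(Q-Q')*g=R'-R$ forces $Q=Q'$.

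The only delicate point is the leading-term lemma $X^k*a=\theta^k(a)X^k+\text{lower}$. The $\delta$-terms must be shown never to reach degree $k$. By induction,
\[
X^{k+1}*a=X*\bigl(\theta^k(a)X^k+\text{lower}\bigr)=\bigl(\theta^{k+1}(a)X+\delta(\theta^k(a))\bigr)X^k+X*(\text{lower}),
\]
and $X$ times a polynomial of degree $<k$ has degree $<k+1$, again by induction. That this multiplication is associative and well defined is taken as given from the construction of Ore extensions \cite{ore1933theory}. We can alternatively cite \cite[Proposition 2]{liu2014kotter} directly.
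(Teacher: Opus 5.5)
Your proof is correct. It is a self-contained version of what the paper only cites. The paper gives no argument of its own for this proposition; it simply invokes \cite[Proposition 2]{liu2014kotter}. What you wrote is the standard leading-coefficient argument for Ore extensions over a field, which is what that citation rests on. Your lemma $X^k*a=\theta^k(a)X^k+(\text{lower terms})$ has a sound induction step. From it you get $f*g=a_n\theta^n(b_m)X^{n+m}+\cdots$, which gives both additivity of degree and the absence of zero divisors. The cancellation by $a_n\theta^{n-m}(b_m)^{-1}X^{n-m}*g$ then yields existence of $f=Q*g+R$ with $R=0$ or $\deg R<\deg g$, and additivity of degree gives uniqueness. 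This is exactly the form of division the paper uses immediately afterwards. One small point on what your route buys: this side of division only uses that $\theta$ is injective, so that $\theta^{n-m}(b_m)\neq 0$. Surjectivity of $\theta$ is what the mirror division $f=g*Q+R$ would need, with $c=\theta^{-m}(b_m^{-1}a_n)$ chosen so that $g*cX^{n-m}$ has leading coefficient $a_n$. So under the paper's automorphism hypothesis the ring is Euclidean on both sides, and the ambiguity in conventions for ``right Euclidean'' does not matter here.
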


\noindent
As a consequence of the  Proposition \ref{5pp1}, for any polynomials \( f\langle X\rangle, g\langle X\rangle \in \mathbb{F}_{q}[X; \theta, \delta] \) with \( g\langle X\rangle \neq 0 \), there exists unique polynomials \( q\langle X\rangle \) (the quotient) and \( r\langle X\rangle \) (the remainder) in \( \mathbb{F}_{q}[X; \theta, \delta] \) such that
\begin{eqnarray}\label{5eqn111}
	f\langle X\rangle &=& q\langle X\rangle * g\langle X\rangle + r\langle X\rangle,
\end{eqnarray}

\noindent with either \( r\langle X\rangle = 0 \) or \(\deg(r\langle X\rangle) < \deg(g\langle X\rangle)\). If $r\langle X\rangle=0,$ then $f\langle X \rangle=q\langle X\rangle*g\langle X\rangle=0~\ \mod _*g.$\\

Linear codes over finite fields are a class of error-correcting codes, where each codeword is a vector in a vector space over a finite field \(\mathbb{F}_q\). A linear code $C$ is defined as a subspace of \(\mathbb{F}_q^n\). The code is typically denoted as \([n, m, d]_q\), where \(n\) is the length of  $C$, \(m\) is the dimension of  $C$, and \(d\) is the minimum Hamming distance between distinct codewords in $C$. When \(d = n - m + 1\), the code is called  MDS, and the redundant part of the generator matrix forms an MDS matrix. To move forward, we first present the following definitions and lemma, which play a key role in the subsequent discussion.
\begin{definition}\label{5def001}\cite[Definition 4]{gupta2017towards}
	An \([n, m, d]_q\) code $C$ with generator matrix \( G = [M|I] \), where \( M \) is a \( m \times (n - m) \) matrix, is MDS if and only if every square submatrix of \( M \) is nonsingular. We say that \( M \) is an MDS matrix if the corresponding code $C$ is MDS.
\end{definition}

\begin{lemma}\label{5l9}
	\cite[Corollary 3, p. 319]{MacWilliams1977the} Let \( M \in M(m,\mathbb{F}_q) \). Then, the matrix \( M \) is MDS if and only if any \( m \) columns of \( G = [M|I] \) are linearly independent over \( \mathbb{F}_q \). Furthermore, \( M \) is MDS if and only if any \( m \) rows of \( \bar{G} = \begin{bmatrix} I \\ M \end{bmatrix} \) are linearly independent over \( \mathbb{F}_q \).
\end{lemma}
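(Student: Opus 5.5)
The plan is to reduce the statement to the classical characterization of MDS codes in terms of information sets. Throughout, I view the $[2m,m]$ code $C$ as generated by the rows of $G=[M\,|\,I]$.

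First, I use Definition~\ref{5def001}: $C$ is MDS exactly when its minimum distance is $d=2m-m+1=m+1$. I will show that this holds if and only if no nonzero codeword vanishes on $m$ coordinates.

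Next, fix a set $S$ of $m$ coordinates and let $G_S$ be the $m\times m$ submatrix of $G$ formed by those columns. A codeword is $xG$ with $x\in\mathbb{F}_q^m$, and its restriction to $S$ is $xG_S$. Hence some nonzero codeword vanishes on $S$ precisely when $G_S$ is singular. Since $G$ has rank $m$ (it contains $I$), a nonzero $x$ gives a nonzero codeword. So $d\ge m+1$ if and only if every $G_S$ is nonsingular, that is, any $m$ columns of $G$ are linearly independent. The Singleton bound gives $d\le m+1$, so MDS is equivalent to $d\ge m+1$. This proves the first assertion.

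To connect this with square submatrices of $M$, as in Definition~\ref{5def001}, take $S$ consisting of $k$ columns of $M$ and $m-k$ columns of $I$. Expanding $\det G_S$ along the identity columns (a Laplace expansion) shows that $\det G_S$ equals, up to sign, the $k\times k$ minor of $M$ whose rows are those indexed by the zero rows of the chosen identity columns, and whose columns are the chosen columns of $M$. As $S$ ranges over all $m$-subsets, every square submatrix of $M$ appears this way. Therefore the two conditions agree.

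For the second assertion, I apply the same argument to the transpose. The rows of $\bar G=\begin{bmatrix} I\\ M\end{bmatrix}$ are the columns of $[I\,|\,M^T]$. Square submatrices of $M^T$ are exactly the transposes of square submatrices of $M$, with the same determinants. So $M$ is MDS iff $M^T$ is MDS iff any $m$ columns of $[M^T|I]$ are independent. Swapping the two blocks only permutes columns, so this is the same as any $m$ columns of $[I|M^T]$ being independent, i.e.\ any $m$ rows of $\bar G$ being independent.

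The only delicate step is the determinant bookkeeping in the Laplace expansion, namely making sure every minor of $M$ is realized by some choice of $S$. I would handle this by writing the choice of $S$ explicitly in terms of the row and column index sets of the minor.
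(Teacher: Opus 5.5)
Your proof is correct. The paper gives no argument of its own for this lemma and simply quotes it from MacWilliams--Sloane. Your argument is essentially the standard one behind that citation: a nonzero codeword vanishes on an $m$-set $S$ iff $G_S$ is singular, Laplace expansion along the identity columns realizes every minor of $M$ as some $\det G_S$, and transposition handles the $\bar G$ version.

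One small bookkeeping point: the criterion $d=n-m+1$ you start from is the definition given in the text just before Definition~\ref{5def001}. Definition~\ref{5def001} itself is already the square-submatrix criterion that your Laplace-expansion step recovers.
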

\begin{lemma}
	Let \( C \) be an \([2m,m,d]\) MDS code over a finite field \( \mathbb{F}_q \). Then, the minimum Hamming weight of any nonzero codeword in \( C \) satisfies:
	\[
	\text{wt}(C) \geq m + 1.
	\]
\end{lemma}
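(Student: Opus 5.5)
The plan is to derive the bound directly from the Singleton-optimality $d = n-k+1$ of an MDS code, applied with $n=2m$ and $k=m$. Recall the convention fixed earlier: an $[n,m,d]_q$ code is MDS exactly when $d = n-m+1$. For a $[2m,m,d]$ code this gives
\[
d = 2m - m + 1 = m+1 .
\]

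The remaining step is to identify $d$ with the minimum weight of a nonzero codeword. Because $C$ is a linear subspace of $\mathbb{F}_q^{2m}$, the Hamming distance satisfies $d(c,c') = \mathrm{wt}(c-c')$. Also, $c-c'$ ranges over all nonzero codewords as $(c,c')$ ranges over distinct pairs. Hence $\min_{c\neq 0}\mathrm{wt}(c) = d = m+1$, which yields $\mathrm{wt}(C)\geq m+1$ for every nonzero codeword.

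If one prefers to argue only from the generator-matrix characterization, Lemma~\ref{5l9} gives a self-contained alternative. Write $C$ as the row space of $G=[M\,|\,I]$ with $M\in M(m,\mathbb{F}_q)$ MDS. Suppose a nonzero codeword $c=uG$, with $u\in\mathbb{F}_q^m\setminus\{0\}$, had weight at most $m$. Then $c$ vanishes on at least $m$ coordinates. Choose $m$ such coordinates. By Lemma~\ref{5l9} the corresponding $m$ columns of $G$ are linearly independent, so the $m\times m$ submatrix $G_S$ they form is invertible. But $uG_S = 0$ with $u\neq 0$, which is a contradiction.

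This alternative route handles any code presented in systematic form. A general MDS code can always be brought to that form, since every $m$ columns of a generator matrix are independent, so any $m$ of them may serve as the information set.

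There is no real obstacle here. The only point needing care is justifying that "minimum distance" and "minimum nonzero weight" coincide, which is exactly where linearity is used.
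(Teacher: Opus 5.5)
Your proposal is correct, and your first route ($d = 2m-m+1 = m+1$, together with the fact that minimum distance equals minimum nonzero weight for a linear code) is exactly the justification the paper has in mind. The paper gives no proof: it states the lemma immediately after declaring an $[n,m,d]_q$ code MDS when $d=n-m+1$. Your second route through Lemma~\ref{5l9} is also sound for codes presented as $[M\,|\,I]$, but putting a general MDS code into that form already uses $d=n-m+1$, so the first route is the one to keep.
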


From a cryptographic point of view, we require the inverse of a matrix; therefore, we restrict our attention to square matrices. In particular, we focus on constructing \([n=2m,\, m,\, d=m+1]_q\) codes for some positive integer \(m\). From these codes, we can derive MDS matrices of size \(m \times m\) over \(\mathbb{F}_q\).

\begin{definition}
	A square matrix \(A\) of order $m$ is said to be MDS if every square submatrix of \(A\) is nonsingular.
\end{definition}

\begin{definition}
	Let \(r\) be a positive integer. A matrix \(M\) is called recursive MDS or \(r\)-MDS if the matrix \(M^r\) is MDS. If \(M\) is \(r\)-MDS, we say that \(M\) yields an MDS matrix.
\end{definition}

	\section{\textbf{ $\delta_\theta$-Circulant MDS matrices}}

In this section, following the algebraic framework developed by Cauchois et al.~\cite{cauchois2019circulant} for circulant and \(\theta\)-circulant matrices. We extend their work to the more general setting of the skew polynomial ring \(\mathbb{F}_q[X;\theta,\delta]\). We begin by introducing the definition of a \(\delta_{\theta}\)-circulant matrix.

\begin{definition}\label{5d1}
	Let \(\theta : \mathbb{F}_q \longrightarrow \mathbb{F}_q\) be an automorphism and \(\delta : \mathbb{F}_q \longrightarrow \mathbb{F}_q\) be a \(\theta\)-derivation. A matrix 
	\[
	A = \begin{bmatrix}
		a_{0,0} & a_{0,1} & \cdots & a_{0,m-1} \\
		a_{1,0} & a_{1,1} & \cdots & a_{1,m-1} \\
		\vdots & \vdots & \ddots & \vdots \\
		a_{j,0} & a_{j,1} & \cdots & a_{j,m-1} \\
		a_{j+1,0} & a_{j+1,1} & \cdots & a_{j+1,m-1} \\
		\vdots & \vdots & \ddots & \vdots \\
		a_{m-1,0} & a_{m-1,1} & \cdots & a_{m-1,m-1}
	\end{bmatrix}_{m\times m},
	\]
	is called a \(\delta_\theta\)-circulant matrix if the \((j+1)\)$^{th}$ row of \(A\) can be written as 
	\[
	(\delta(a_{j,0}) + \theta(a_{j,m-1}), \delta(a_{j,1}) + \theta(a_{j,0}), \dots, \delta(a_{j,m-1}) + \theta(a_{j,m-2})),
	\]
	for \(1 \leq j \leq m-1\). We denote this \(\delta_\theta\)-circulant matrix as \(\delta_\theta\)-circ\((a_{0,0}, a_{0,1}, \dots, a_{0,m-1})\).
\end{definition}

\noindent The $\delta_\theta$-circulant matrix can be expressed in a more explicit form to better understand its structure. To this end, we fix the notations \(x_1 = \theta\), \(x_2 = \delta\), and $a_{0,t}=h_t$ for $0\leq t\leq m-1.$  Let \(\eta_{s_{x_1}}\) and \(\eta_{s_{x_2}}\) denote the number of occurrences of \(x_1\) and \(x_2\), respectively, in the expression \(x_{i_1} x_{i_2} \cdots x_{i_s}\). With these notations, the matrix in Definition \ref{5d1} takes the following form
\[
\delta_{\theta}\textup{-}\textup{circ}(h_0,h_1,\dots,h_{m-1})= (a_{st}) = \left(\sum_{l=0}^{s} \sum_{\substack{\eta_{s_{x_1}} = l, \\ \eta_{s_{x_2}} = s-l}} x_{i_1} x_{i_2} \dots x_{i_s} (\sigma^l(h_t)) \right),
\]
where \(0 \leq s, t \leq m-1\), \(\{i_1, i_2, \dots, i_s\} = \{1, 2\}\), and \(\sigma = \begin{bmatrix} h_0 & h_{m-1} & h_{m-2} & \dots & h_1 \\ h_{m-1} & h_{m-2} & h_{m-3} & \dots & h_0 \end{bmatrix}\),
is a permutation on the symbols $\{h_0,h_1,\dots,h_{m-1}\}.$

We now consider the following \(\delta_{\theta}\)-circulant MDS matrix:

\begin{example}
	Let \(\mathbb{F}_{2^4} = \mathbb{F}_2(\alpha)\), where \(\alpha^4 + \alpha + 1 = 0\). Let \(\theta: \mathbb{F}_{2^4} \longrightarrow \mathbb{F}_{2^4}\) be the automorphism defined by \(\theta(a) = a^4\) and let the \(\theta\)-derivation \(\delta: \mathbb{F}_{2^4} \longrightarrow \mathbb{F}_{2^4}\) be given by \(\delta(a) = \alpha (\theta(a) - a)\). For \(a_{0,0} = 1\), \(a_{0,1} = \alpha^3\), \(a_{0,2} = \alpha\), and \(a_{0,3} = \alpha^2 + \alpha + 1\), the \(\delta_\theta\)-circulant matrix is
	\[
	B = \begin{bmatrix}
		1 & \alpha^3 & \alpha & \alpha^2 + \alpha + 1 \\
		\alpha^2 + \alpha + 1 & \alpha^3 + \alpha^2 + \alpha + 1 & \alpha^3 + \alpha^2 + 1 & \alpha + 1 \\
		\alpha & \alpha^3 + 1 & \alpha^2 & \alpha^3 + 1 \\
		\alpha^3 + \alpha^2 & \alpha^3 + \alpha^2 + 1 & \alpha^3 + \alpha^2 & \alpha^3 + \alpha + 1
	\end{bmatrix}_{4 \times 4}.
	\]
Moreover, the matrix $B$ is MDS.
\end{example}

\begin{remark}  By applying specific conditions on the automorphism \(\theta\) and the derivation \(\delta\), we obtain the following immediate observations:
	
	\begin{enumerate}  
		\item If we set $\delta = 0$ in Definition \ref{5d1}, then we obtain a $\theta$-circulant matrix 
		$$A=\begin{bmatrix}
			a_{0,0}&a_{0,1}&\cdots&a_{0,m-1}\\
			\theta(a_{0,m-1})&\theta(a_{0,0})&\cdots&\theta(a_{0,m-2})\\
			\vdots&\vdots&\ddots&\vdots\\
			\theta(a_{0,1})&\theta(a_{0,2})&\cdots&\theta(a_{0,0})
			
		\end{bmatrix}_{m\times m}.$$  
		\item If we set $\delta = 0$  and $\theta=I|_{\mathbb{F}_q}$ in Definition \ref{5d1}, then we obtain circulant matrix 
		$$A=\begin{bmatrix}
			a_{0,0}&a_{0,1}&\cdots&a_{0,m-1}\\
			a_{0,m-1}&a_{0,0}&\cdots&a_{0,m-2}\\
			\vdots&\vdots&\ddots&\vdots\\
			a_{0,1}&a_{0,2}&\cdots&a_{0,0}
		\end{bmatrix}_{m\times m}.$$
	\end{enumerate}  
\end{remark}  
\noindent Throughout this section, we fix some notations.  Let $\theta:\mathbb{F}_{q}\longrightarrow \mathbb{F}_{q}$ be an automorphism of order $m$, that is, $\theta^m(a)=a$ for all $a\in \mathbb{F}_q$ and $\delta:\mathbb{F}_{q}\longrightarrow \mathbb{F}_{q}$ be $\theta$-derivation such that $\delta \circ \theta=\theta \circ  \delta.$
For \( X^m - 1 \in \mathbb{F}_q[X; \theta, \delta] \), \( J = ( X^m - 1 )\) be the two-sided ideal generated by \( X^m - 1 \) if \( m = 2^t \). Now, define a quotient ring using the ideal \( J \) as follows:
$$\frac{\mathbb{F}_q[X;\theta,\delta]}{( X^m-1 )}=\Biggl\{b_0+b_1X+\cdots+b_{m-1}X^{m-1}+( X^m-1);~b_i \in \mathbb{F}_q,~0\leq i \leq m-1\Biggr\}.$$


The following theorem establishes a relationship between a polynomial \( h\langle X\rangle = (X^{m} - 1) + \sum_{i=0}^{m - 1} h_i X^i \in \mathbb{F}_{q}[X; \theta, \delta] \) and  \(\delta_\theta\text{-}\mathrm{circ}(h_0, h_1, \dots, h_{m-1})\) matrix.

\begin{theorem}\label{5l1}
	Let $\theta:\mathbb{F}_{q}\longrightarrow \mathbb{F}_{q}$ be an automorphism and $\delta:\mathbb{F}_{q}\longrightarrow \mathbb{F}_{q}$ be a $\theta$-derivation. Next, let $h\langle X \rangle=(X^m-1)+\sum_{i=0}^{m-1}h_iX^i \in \mathbb{F}_{q}[X;\theta,\delta]$ be a monic polynomial of degree $m$. Then, the matrix associated with the linear transformation
	\begin{eqnarray*}
		\psi:\frac{\mathbb{F}_{q}[X;\theta;\delta]}{(X^m-1)}&\longrightarrow& \frac{\mathbb{F}_{q}[X;\theta;\delta]}{(X^m-1)}\\
		\psi\Big(Q\langle X \rangle\Big)&:=& Q\langle X \rangle*h\langle X \rangle,
	\end{eqnarray*}
	with respect to the basis $\{1,X,X^2,\dots,X^{m-1}\}$ is of the form  
	$$A=(a_{su})=\Bigg(\sum_{i=0}^{s}\binom{s}{i}\theta^i \circ \delta^{s-i}(\sigma^i(h_u))\Bigg),$$
	where $~0\leq s,~u\leq m-1$, and $\sigma=\begin{bmatrix}
		h_0&h_{m-1}&h_{m-2}&\dots&h_1\\
		h_{m-1}&h_{m-2}&h_{m-3}&\dots&h_0
	\end{bmatrix}$ be a permutation on set $\{h_0,h_1,\dots,h_{m-1}\}.$
\end{theorem}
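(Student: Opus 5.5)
The plan is to identify the rows of $A$ directly: row $s$ of the matrix of $\psi$ is the coefficient vector of $\psi(X^s)=X^s*h\langle X\rangle$ reduced modulo $X^m-1$, written in the basis $\{1,X,\dots,X^{m-1}\}$. I would then show that these rows obey the $\delta_\theta$-circulant recursion of Definition \ref{5d1}, and finally solve that recursion in closed form by a binomial expansion.

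Throughout I work under the standing hypotheses of this section: $\theta^m=\mathrm{id}$, $\delta\circ\theta=\theta\circ\delta$, $m=2^t$, and $J=(X^m-1)$ a two-sided ideal. Two-sidedness is needed in two places: it makes the quotient a ring (so $\psi$ is well defined and $\mathbb{F}_q$-linear), and it lets us reduce modulo $X^m-1$ after each left multiplication by $X$.

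\textbf{Step 1 (row $0$).} We have $\psi(1)=h\langle X\rangle\equiv\sum_{u=0}^{m-1}h_uX^u \pmod{J}$. So row $0$ is $(h_0,\dots,h_{m-1})$, which is the $s=0$ term of the formula.

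\textbf{Step 2 (the one-step recursion).} Write $X^s*h\equiv\sum_u c^{(s)}_uX^u \pmod J$. Then
\[
X^{s+1}*h \equiv X*\Big(\sum_u c^{(s)}_uX^u\Big)=\sum_u\big(\theta(c^{(s)}_u)X^{u+1}+\delta(c^{(s)}_u)X^u\big),
\]
using $X*a=\theta(a)X+\delta(a)$. Replacing $X^m$ by $1$ gives
\[
c^{(s+1)}_u=\delta(c^{(s)}_u)+\theta(c^{(s)}_{u-1}),
\]
with indices taken mod $m$. This is exactly the row rule of Definition \ref{5d1}, so $A=\delta_\theta\text{-}\mathrm{circ}(h_0,\dots,h_{m-1})$.

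\textbf{Step 3 (closed form).} Define operators on $\mathbb{F}_q^m$ by
\[
\Theta(v)_u=\theta(v_u),\qquad \Delta(v)_u=\delta(v_u),\qquad (Sv)_u=v_{u-1}.
\]
Step 2 says row $s$ equals $(\Theta S+\Delta)^s$ applied to row $0$. The operators $\Theta$ and $\Delta$ act coordinatewise, so each commutes with the shift $S$. They commute with each other by the hypothesis $\delta\theta=\theta\delta$. All three are additive, so the binomial theorem applies in the commutative subring of additive endomorphisms they generate:
\[
(\Theta S+\Delta)^s=\sum_{i=0}^s\binom{s}{i}\,\Theta^i\Delta^{s-i}S^i .
\]
The shift acts on row $0$ by $S^i$, giving entry $h_{u-i}=\sigma^i(h_u)$, since $\sigma$ sends $h_u\mapsto h_{u-1}$. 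Substituting yields
\[
a_{su}=\sum_{i=0}^s\binom{s}{i}\,\theta^i\circ\delta^{s-i}\big(\sigma^i(h_u)\big).
\]
The binomial coefficients are read in characteristic $2$.

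\textbf{Main obstacle.} The delicate point is the legitimacy of the reduction rather than the computation. One must check that $X^m-1$ really generates a two-sided ideal, so that $\psi$ is well defined and reduction commutes with left multiplication by $X$. One must also check that $\theta$ and $\delta$ commute, since without this the binomial expansion fails and one only gets the word-sum form displayed after Definition \ref{5d1}. I would handle the ideal issue by invoking the standing assumption of this section. If a direct check were needed, I would try to show $X^m*a=a*X^m+(\text{terms}) $ with the extra terms vanishing, using $\theta^m=\mathrm{id}$, $m=2^t$, characteristic $2$, and the form $\delta=\beta(\theta-\mathrm{id})$ from Section~1. Here $(\Theta S+\Delta)^{2^t}=\Theta^{m}S^{m}+\Delta^{m}$ by the Frobenius-type identity for commuting operators, and one would need $\delta^m=0$, which I would verify from the explicit form of $\delta$. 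That verification is the step I expect to require the most care.
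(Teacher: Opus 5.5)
Your proposal is correct and follows essentially the paper's route: row $s$ is the coefficient vector of $X^s*h$, the next row comes from $X^{s+1}*h=X*(X^s*h)$ using $X*a=\theta(a)X+\delta(a)$ and $X^m\equiv 1$, and one inducts on $s$. Your operator formulation $(\Theta S+\Delta)^s$ with the binomial theorem for commuting additive maps is tighter than the paper's write-up, whose displayed induction formula for $X^k*h$ omits the factor $\binom{k}{i}$ and never says where $\delta\circ\theta=\theta\circ\delta$ is used (it is needed: at $s=2$ the cross term is $(\delta\theta+\theta\delta)(h_{u-1})$, which vanishes in characteristic $2$ only when $\theta$ and $\delta$ commute).
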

\begin{proof}
	Given that $\theta:\mathbb{F}_{q}\longrightarrow \mathbb{F}_{q}$ be an automorphism and $\delta:\mathbb{F}_{q}\longrightarrow \mathbb{F}_{q}$ be $\theta$-derivation. Next, let $h\langle X \rangle=(X^m-1)+\sum_{i=0}^{m-1}h_iX^i \in \mathbb{F}_{q}[X;\theta,\delta]$ be a monic polynomial of degree $m$. To prove that $A$ is the matrix associated with the linear transformation  
	\[
	\psi: \frac{\mathbb{F}_{q}[X; \theta; \delta]}{(X^m - 1)} \longrightarrow \frac{\mathbb{F}_{q}[X; \theta; \delta]}{(X^m - 1)}
	\]  
	defined by  
	\[
	\psi\Big(Q\langle X \rangle\Big) = Q\langle X \rangle * h\langle X \rangle,
	\]  
	with respect to the basis $\{1, X, X^2, \dots, X^{m-1}\}$, it suffices to show 
	\begin{align}\label{5eqn31}
		X^k * h   \langle X \rangle  = 
		& \sum_{j=0}^{m-1} \Bigg(\sum_{i=0}^{k} 
		\theta^i \circ \delta^{k-i} (\sigma^i(h_{j}))\Bigg) X^{j}~ 
	\end{align}
	for all $k \in \{ 1, 2, \dots, m-1\}$. We prove this result by mathematical induction on $k$.  For $k = 1$, we have  
	\begin{eqnarray}
		X * h\langle X \rangle \notag &=& X * (h_0 + h_1X + \cdots + h_{m-1}X^{m-1}) \\  
		\notag &=& (\theta(h_0)X + \delta(h_0)) + (\theta(h_1)X^2 + \delta(h_1)X) + \cdots   
		 + (\theta(h_{m-1})X^m + \delta(h_{m-1})X^{m-1}) \\  
		\notag &=& (\delta(h_0) + \theta(h_{m-1})) + (\delta(h_1) + \theta(h_0))X + \cdots 
		+ (\delta(h_{m-1}) + \theta(h_{m-2}))X^{m-1}\\
		&=&\sum_{j=0}^{m-1} \Bigg(\sum_{i=0}^{1} 
		\theta^i \circ \delta^{1-i} (\sigma^i(h_{j}))\Bigg) X^{j},
	\end{eqnarray}  
	which is the same as (\ref{5eqn31}) when we take $k = 1$.   Let us assume that the results holds for  $1 < k < m - 1$. We now prove the result for $k + 1$.
	\begin{eqnarray*}
		X^{k+1}*h\langle X\rangle&=&X*(X^k*h\langle X\rangle)\\
		&=& X*\Bigg(\sum_{j=0}^{m-1} \Bigg(\sum_{i=0}^{k} 
		\theta^i \circ \delta^{k-i} (\sigma^i(h_{j}))\Bigg) X^{j}\Bigg)  \\
		&=& \sum_{j=0}^{m-1}\theta\Bigg(\sum_{i=0}^{k} 
		\theta^i \circ \delta^{k-i} (\sigma^i(h_{j})) \Bigg)X^{j+1}+\sum_{j=0}^{m}\delta\Bigg(\sum_{i=0}^{k} 
		\theta^i \circ \delta^{k-i} (\sigma^i(h_{j})) \Bigg)X^j\\
		&=& \sum_{j=0}^{m-1} \Bigg(\sum_{i=0}^{k+1} 
		\theta^i \circ \delta^{k+1-i} (\sigma^i(h_{j}))\Bigg) X^{j},
	\end{eqnarray*}

	\noindent where $\sigma$ is a permutation on a set $\{h_0,~h_1,\ldots,h_{m-1}\}$ and is defined as 
	$$\sigma=\begin{bmatrix}
		h_0&h_{m-1}&h_{m-2}&\dots&h_1\\
		h_{m-1}&h_{m-2}&h_{m-3}&\dots&h_0
	\end{bmatrix}.$$
	This completes the proof.
\end{proof}

\begin{corollary}
	Let $C_{h,\theta,\delta}$ and $C_{p,\theta,\delta}$ are the $\delta_{\theta}$-circulant matrices associated with polynomials $h\langle X \rangle=(X^m-1)+\sum_{i=0}^{m-1}h_iX^i, ~p\langle X \rangle=(X^m-1)+\sum_{i=0}^{m-1}p_iX^i \in \mathbb{F}_{q}[X;\theta,\delta]$, respectively. Then, $C_{h,\theta,\delta}\cdot C_{g,\theta,\delta}$ also $\delta_\theta$-circulant matrix.
\end{corollary}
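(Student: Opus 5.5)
Via Theorem \ref{5l1}, I would read $C_{h,\theta,\delta}$ as the matrix of right multiplication by the class of $h\langle X\rangle$ on $R:=\mathbb{F}_q[X;\theta,\delta]/(X^m-1)$, in the basis $\{1,X,\dots,X^{m-1}\}$ acting on row coordinate vectors. Throughout I use the standing hypotheses of this section: $\theta^m=\mathrm{id}$, $\delta\circ\theta=\theta\circ\delta$, and $m=2^t$. Under these, $(X^m-1)$ is a two-sided ideal and $R$ is an associative ring. The plan is to show that the product of two such multiplication matrices is again a multiplication matrix, and that every multiplication matrix is $\delta_\theta$-circulant.

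\textbf{Step 1 (characterisation).} I would first prove that a matrix $A\in M(m,\mathbb{F}_q)$ is $\delta_\theta$-circulant if and only if it is the matrix of $Q\mapsto Q*g$ for some $g\in R$. For the forward direction, given $g=\sum_j g_jX^j$, the $k=1$ computation in the proof of Theorem \ref{5l1} gives
\[
X*(X^s*g)\equiv\sum_j\bigl(\delta(c_j)+\theta(c_{j-1})\bigr)X^j \pmod{X^m-1},
\]
where $X^s*g=\sum_j c_jX^j$ and indices are taken mod $m$. This is exactly the row rule of Definition \ref{5d1}. For the converse, a $\delta_\theta$-circulant matrix is determined by its first row $(h_0,\dots,h_{m-1})$, and that row equals the first row of the multiplication matrix of $g=\sum h_jX^j$. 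Hence every $\delta_\theta$-circulant matrix is of this form. Note that the leading term $X^m-1$ in $h\langle X\rangle$ vanishes in $R$, so $C_{h,\theta,\delta}$ is the matrix of right multiplication by $\bar h=\sum h_iX^i$.

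\textbf{Step 2 (product).} Write $\psi_h(Q)=Q*\bar h$ and $\psi_p(Q)=Q*\bar p$. With row vectors, the row of $X^s$ under $\psi_h\circ$ followed by $\psi_p$ is obtained by multiplying the row of $\psi_h(X^s)$ by $C_{p,\theta,\delta}$. So the product $C_{h,\theta,\delta}\cdot C_{p,\theta,\delta}$ is the matrix of
\[
Q\mapsto (Q*\bar h)*\bar p = Q*(\bar h*\bar p),
\]
using associativity in $R$. By Step 1 this is a $\delta_\theta$-circulant matrix. Its first row is the coefficient vector of $\bar h*\bar p$ reduced mod $X^m-1$; equivalently, it is the multiplication matrix of the monic polynomial $X^m-1+\overline{h*p}$. (The statement's $C_{g,\theta,\delta}$ is presumably a typo for $C_{p,\theta,\delta}$.)

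\textbf{Main obstacle.} The delicate point is well-definedness: $R$ must be a ring, so that the products, the reduction of $X^m$ to $1$, and associativity all make sense. Since $R$ only needs to be a left module quotient for $\psi$, the essential requirement is that $X^m-1$ generates a two-sided ideal. I would check that $X^m$ commutes with scalars. Because $m=2^t$ and the characteristic is 2, with $\theta,\delta$ commuting, the binomial expansion gives
\[
X^m a=\sum_i\binom{m}{i}\theta^i\delta^{m-i}(a)X^i=\theta^m(a)X^m+\delta^m(a).
\]
So the requirement reduces to $\theta^m=\mathrm{id}$ and $\delta^m=0$ (or at least that $\delta^m(a)$ is absorbed, i.e.\ $\delta^m(a)=0$). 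I would verify this from $\delta=\beta(\theta-\mathrm{id})$: then $\delta^m=\beta^m(\theta-\mathrm{id})^m=\beta^m(\theta^m-\mathrm{id})=0$, using that $\theta$ fixes $\beta$ when $\delta$ and $\theta$ commute. This supplies the missing justification for the section's assumption that $J$ is two-sided.
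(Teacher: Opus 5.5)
Your proposal is correct and follows the route the paper intends; the paper states this corollary without proof as a consequence of Theorem~\ref{5l1}. The idea is that $C_{h,\theta,\delta}$ is the matrix of right multiplication by $\bar h$ on $\mathbb{F}_q[X;\theta,\delta]/(X^m-1)$, so $C_{h,\theta,\delta}\cdot C_{p,\theta,\delta}$ is the matrix of right multiplication by $\bar h*\bar p$, which is the same composition argument the paper uses in its involutory criterion. Your additional steps supply justification the paper leaves implicit: you derive the identification from the row rule defining $\delta_\theta$-circulant matrices rather than from the closed form, and you check that $X^m-1$ is central via $X^m a=\theta^m(a)X^m+\delta^m(a)$ and $\delta^m=\beta^m(\theta^m-\mathrm{id})=0$. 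You are also right that $C_{g,\theta,\delta}$ should read $C_{p,\theta,\delta}$.
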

If we take $\delta=0$ in Theorem~\ref{5l1}, we obtain the following corollary:
 
 \begin{corollary}\cite[Proposition 2]{cauchois2019circulant}
	Let $\theta:\mathbb{F}_{q}\longrightarrow \mathbb{F}_{q}$ be an automorphism.  Next, let $h\langle X \rangle=(X^m-1)+\sum_{i=0}^{m-1}h_iX^i \in \mathbb{F}_{q}[X;\theta]$ be a monic polynomial of degree $m$. Then, the matrix associated with the linear transformation
\begin{eqnarray*}
	\psi:\frac{\mathbb{F}_{q}[X;\theta]}{(X^m-1)}&\longrightarrow& \frac{\mathbb{F}_{q}[X;\theta]}{(X^m-1)}\\
	\psi\Big(Q\langle X \rangle\Big)&:=& Q\langle X \rangle*h\langle X \rangle,
\end{eqnarray*}
with respect to the basis $\{1,X,X^2,\dots,X^{m-1}\}$ is of the form  
$$A=(a_{su})=\Bigg(\theta^s (\sigma^s(h_u))\Bigg),$$
where $~0\leq s,~u\leq m-1$, and $\sigma=\begin{bmatrix}
	h_0&h_{m-1}&h_{m-2}&\dots&h_1\\
	h_{m-1}&h_{m-2}&h_{m-3}&\dots&h_0
\end{bmatrix}$ be a permutation on set $\{h_0,h_1,\dots,h_{m-1}\}.$ 	
 \end{corollary}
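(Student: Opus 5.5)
The plan is to specialize Theorem~\ref{5l1} to the zero derivation, and to confirm the formula by a short direct induction so the corollary does not depend on every detail of the general computation. Setting $\delta=0$ in Definition~\ref{5d1} of the multiplication gives the twisted ring $\mathbb{F}_q[X;\theta]$ of Remark~\ref{5Rem1}, where $X*a=\theta(a)X$. In the formula $a_{su}=\sum_{i=0}^{s}\binom{s}{i}\theta^i\circ\delta^{s-i}(\sigma^i(h_u))$, every term with $s-i\ge 1$ contains at least one factor of $\delta$ and therefore vanishes. The only surviving term is $i=s$, which has coefficient $\binom{s}{s}=1$ and $\delta^0=\mathrm{id}$. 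This leaves $a_{su}=\theta^s(\sigma^s(h_u))$, as claimed.

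For the direct check, I would prove by induction on $k$ that
\[
X^k*h\langle X\rangle\equiv \sum_{j=0}^{m-1}\theta^k(\sigma^k(h_j))X^j \pmod{X^m-1}.
\]
The base case $k=0$ is trivial. For the step, left-multiply by $X$. Each term $cX^j$ becomes $\theta(c)X^{j+1}$. The only term leaving the basis is the top one, $\theta(c_{m-1})X^m$, and reducing it gives $\theta(c_{m-1})\cdot 1$. Here I use that $X^m-1$ is two-sided in $\mathbb{F}_q[X;\theta]$, because $\theta^m=\mathrm{id}$ makes $X^m$ central, so $X^m\equiv 1$ is a valid reduction. After the shift, the coefficient at position $j$ is therefore $\theta$ applied to the old coefficient at position $j-1$, with indices taken mod $m$. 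The index permutation $\sigma$ sends $h_j\mapsto h_{j-1}$ cyclically, which is exactly this shift. Hence the new coefficient at $j$ is $\theta\bigl(\theta^k(\sigma^k(h_{j-1}))\bigr)=\theta^{k+1}(\sigma^{k+1}(h_j))$.

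Linearity of $\psi$ then identifies row $s$ of the matrix with the coefficient vector of $X^s*h\langle X\rangle$, which completes the argument.

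The only delicate point is bookkeeping. The $-1+\sum h_iX^i$ normalization of $h$ is just notation for the residue, and $X^m$ must reduce to $1$. One also has to keep straight that $\sigma$ acts on the symbol indices while $\theta$ acts on the field values, since these two operations commute.
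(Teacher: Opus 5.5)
Your proposal is correct and follows the paper's route: the paper obtains this corollary simply by setting $\delta=0$ in Theorem~\ref{5l1}, so that only the $i=s$ term $\theta^s(\sigma^s(h_u))$ survives. Your supplementary direct induction on $X^k*h\langle X\rangle$ is the $\delta=0$ case of the paper's inductive proof of Theorem~\ref{5l1}, and your use of $\theta^m=\mathrm{id}$ matches the section's standing hypothesis, so it is a sound but optional confirmation.
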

 If we take $\delta=0$ and $\theta$ as an identity map in Theorem~\ref{5l1}, we obtain the following corollary:
 
  \begin{corollary}\cite[Proposition 1]{cauchois2019circulant}
  	Let $\theta:\mathbb{F}_{q}\longrightarrow \mathbb{F}_{q}$ be an automorphism.  Next, let $h\langle X \rangle=(X^m-1)+\sum_{i=0}^{m-1}h_iX^i \in \mathbb{F}_{q}[X]$ be a monic polynomial of degree $m$. Then, the matrix associated with the linear transformation
  	\begin{eqnarray*}
  		\psi:\frac{\mathbb{F}_{q}[X]}{(X^m-1)}&\longrightarrow& \frac{\mathbb{F}_{q}[X]}{(X^m-1)}\\
  		\psi\Big(Q\langle X \rangle\Big)&:=& Q\langle X \rangle*h\langle X \rangle,
  	\end{eqnarray*}
  	with respect to the basis $\{1,X,X^2,\dots,X^{m-1}\}$ is of the form  
  	$$A=(a_{su})=(\sigma^s(h_u)),$$
  	where $~0\leq s,~u\leq m-1$, and $\sigma=\begin{bmatrix}
  		h_0&h_{m-1}&h_{m-2}&\dots&h_1\\
  		h_{m-1}&h_{m-2}&h_{m-3}&\dots&h_0
  	\end{bmatrix}$ be a permutation on set $\{h_0,h_1,\dots,h_{m-1}\}.$ 	
  \end{corollary}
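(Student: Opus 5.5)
The plan is to obtain this statement by specializing Theorem~\ref{5l1} to $\delta=0$ and $\theta=I|_{\mathbb{F}_q}$, and then to confirm the result by a short direct computation in the commutative ring $\mathbb{F}_q[X]/(X^m-1)$. The direct check guards against any ambiguity in how the inductive formula of Theorem~\ref{5l1} handles the vanishing derivation. When $\delta=0$ and $\theta$ is the identity, the rule $X*a=\theta(a)X+\delta(a)$ becomes $X*a=aX$. So $\mathbb{F}_q[X;\theta,\delta]$ is the ordinary polynomial ring $\mathbb{F}_q[X]$ of Remark~\ref{5Rem1}. The ideal $(X^m-1)$ is then two-sided without any restriction on $m$, and $\psi$ is the multiplication-by-$h$ map on the quotient.

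First step: read off the entry formula of Theorem~\ref{5l1},
\[
a_{su}=\sum_{i=0}^{s}\binom{s}{i}\theta^i\circ\delta^{s-i}\big(\sigma^i(h_u)\big),
\]
with the convention that $\delta^0$ is the identity. For $i<s$ the factor $\delta^{s-i}$ is a positive power of the zero map, so every such term vanishes. Only the term $i=s$ survives, with coefficient $\binom{s}{s}=1$. This gives $a_{su}=\theta^s(\sigma^s(h_u))$, which is the preceding corollary. Putting $\theta=I|_{\mathbb{F}_q}$ then gives $a_{su}=\sigma^s(h_u)$.

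Second step: verify this independently. Modulo $X^m-1$ we have $h\langle X\rangle\equiv\sum_{j=0}^{m-1}h_jX^j$ and $X^m\equiv 1$. Hence
\[
X^s*h\langle X\rangle\equiv\sum_{j=0}^{m-1}h_jX^{j+s}\equiv\sum_{u=0}^{m-1}h_{u-s \bmod m}X^u .
\]
The permutation $\sigma$ sends $h_j\mapsto h_{j-1}$, with indices taken mod $m$. Therefore $\sigma^s(h_u)=h_{u-s\bmod m}$, and the coefficient of $X^u$ in row $s$ is exactly $\sigma^s(h_u)$. This is the classical circulant matrix of the earlier Remark.

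The only real obstacle is bookkeeping. One must fix the convention $\delta^0=\mathrm{id}$ so that the $i=s$ term of Theorem~\ref{5l1} is not killed. One must also check that the direction of $\sigma$ matches the row-by-row right cyclic shift. The direct computation in the second step settles both points.
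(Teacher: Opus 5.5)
Your proposal is correct, and its first step is exactly the paper's argument. The paper gives no separate proof: it obtains this corollary simply by setting $\delta=0$ and $\theta=I|_{\mathbb{F}_q}$ in Theorem~\ref{5l1}, so that only the $i=s$ term $\theta^s(\sigma^s(h_u))=\sigma^s(h_u)$ survives. Your additional direct computation in the commutative ring $\mathbb{F}_q[X]/(X^m-1)$, using $\sigma^s(h_u)=h_{u-s \bmod m}$, is a sound independent check that does not rely on the inductive step of Theorem~\ref{5l1}, which the paper only sketches.
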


	\begin{definition}\cite[Definition 6.3]{MacWilliams1977the}
		For a given  polynomial  
		$
		f\langle X \rangle  = a_0 + a_1X + \dots + a_nX^n$  
		over a finite field \( \mathbb{F}_q \), the weight of \( f\langle X \rangle \) is  define as
		$$
		wt(f\langle X \rangle ) = |\{ i \mid a_i \neq 0, \ 0 \leq i \leq n \}|.$$
	\end{definition}
	

	Based on the $\delta_\theta$-circulant matrix constructed in Theorem \ref{5l1}, we are able to give an algebraic necessary and sufficient condition for  such a matrix to be MDS.
	\begin{theorem}\label{wtmds}
		Let $h\langle X \rangle=(X^m-1)+\sum_{i=0}^{m-1}h_iX^i\in \mathbb{F}_{q}[X;\theta,\delta]$. Then, $\delta_\theta\)-circ\((h_0, h_1, \dots, h_{m-1})$ is MDS if and only if for all $Q_1\langle X \rangle\in \mathbb{F}_{q}[X;\theta,\delta]$, we have,
		$$wt(Q_1\langle X \rangle)+wt(Q_1\langle X \rangle h\langle X \rangle\ \mod  (X^m-1))\geq m+1.$$
	\end{theorem}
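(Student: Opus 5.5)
Via Theorem~\ref{5l1}, the statement becomes the standard characterization of MDS codes by minimum weight, applied to the code generated by $[I \mid A]$ with $A=\delta_\theta\text{-circ}(h_0,\dots,h_{m-1})$.

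\emph{Step 1: translate to coordinates.} Identify $\mathbb{F}_q[X;\theta,\delta]/(X^m-1)$ with $\mathbb{F}_q^m$ through the basis $\{1,X,\dots,X^{m-1}\}$. Every $Q_1\langle X\rangle$ may be replaced by its remainder modulo $X^m-1$, using the right division of Proposition~\ref{5pp1} and the fact that $(X^m-1)$ is two-sided. Write $Q_1=\sum_{s}c_sX^s$ with coefficient row vector $c=(c_0,\dots,c_{m-1})$. By Theorem~\ref{5l1} and left $\mathbb{F}_q$-linearity of $\psi$,
\[
Q_1 h \bmod (X^m-1)=\sum_s c_s\,(X^s*h)=\sum_{u}(cA)_u X^u .
\]
This holds because $c_s*(X^s*h)$ is just scalar multiplication of coefficients on the left. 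Hence $wt(Q_1)=wt(c)$ and $wt(Q_1h \bmod (X^m-1))=wt(cA)$. Row $s$ of $A$ is the coordinate vector of $X^s*h$, which is exactly how $A$ is defined in Theorem~\ref{5l1}. So the weight condition in the statement is equivalent to
\[
wt(c)+wt(cA)\ge m+1 \quad \text{for all nonzero } c\in\mathbb{F}_q^m .
\]
(For $Q_1=0$ the inequality fails trivially, so the statement must be read for nonzero $Q_1$; I will note this.)

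\emph{Step 2: equivalence with MDS.} Let $C=\{(c,cA)\}$ be the $[2m,m]$ code with generator $[I\mid A]$. Its weights are exactly $wt(c)+wt(cA)$, and the Singleton bound gives $d\le m+1$. So the condition of Step 1 says $d=m+1$, i.e.\ $C$ is MDS. By Definition~\ref{5def001} and Lemma~\ref{5l9}, this is equivalent to every square submatrix of $A$ being nonsingular; the transposed form $[A\mid I]$ has the same submatrices up to transposition.

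For completeness I would also give the direct argument. Suppose a $k\times k$ submatrix on rows $R$ and columns $S$ is singular. Then there is a nonzero $c$ supported on $R$ with $(cA)_u=0$ for all $u\in S$. This gives $wt(c)\le k$ and $wt(cA)\le m-k$, so the sum is at most $m$. Conversely, suppose $c\ne0$ has $wt(c)+wt(cA)\le m$. Put $R=\mathrm{supp}(c)$ with $|R|=k$, and let $Z$ be the zero set of $cA$, so $|Z|\ge k$. Choose $S\subseteq Z$ with $|S|=k$. Then $c|_R$ is a nonzero vector in the left kernel of $A[R,S]$, so that submatrix is singular.

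\emph{Main obstacle.} The only delicate point is the first step: confirming that the coefficient vector of $Q_1*h$ modulo $X^m-1$ is $cA$, with $c$ acting on the left. This requires the ordering convention of Theorem~\ref{5l1}, in which $\psi$ is right multiplication by $h$ and rows index $X^s*h$. It also requires that the left scalars $c_s$ do not get twisted by $\theta$ or $\delta$. They do not, because $c_s$ multiplies from the left and $a*X=aX$. Once this is checked, the rest is the classical coding-theory argument.
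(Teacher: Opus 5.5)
Your proposal is correct and follows the same route as the paper. You identify $Q_1$ with its coefficient vector $c$, use Theorem~\ref{5l1} to see that $Q_1 * h \bmod (X^m-1)$ has coefficient vector $cA$, and then invoke the weight characterization of the MDS code generated by $[I\mid A]$. Your extras (the check that left scalars are not twisted, the direct submatrix argument, and the caveat on $Q_1$) only make explicit what the paper leaves implicit. One correction to the caveat: it must exclude every $Q_1\in(X^m-1)$, not just $Q_1=0$. For example, $Q_1=X^m-1$ gives weight sum $2$. So the statement should be read over nonzero $Q_1$ of degree less than $m$.
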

	\begin{proof}
		Let $A=\delta_\theta$-circ$(h_0,h_1,\dots,h_{m-1})$ is MDS.
		\begin{eqnarray*}
			&\Leftrightarrow& (I_m|A)~\textup{is the generator matrix of MDS code}\\
			&\Leftrightarrow& \textup{for all } (q_0,q_1,\dots,q_{m-1})\in \mathbb{F}^m_{q},~wt((q_0,q_1,\dots,q_{m-1})\cdot(I_m|A))\geq m+1\\
			&\Leftrightarrow& wt(q_0,q_1,\dots,q_{m-1})+wt((q_0,q_1,\dots,q_{m-1})\cdot A)\geq m+1.
		\end{eqnarray*}
		If one consider $Q_1\langle X \rangle=\sum_{i=0}^{m-1}q_iX^i$, then 
		$$wt(q_0,q_1,\dots,q_{m-1})=wt(Q_1).$$
		From Theorem \ref{5l1}, we know that \( A \) corresponds to right multiplication by \( h \langle X \rangle \) in \( \frac{\mathbb{F}_{q}[X; \theta, \delta]}{(X^m - 1)} \). Thus, we have:
		$$wt((q_0,q_1,\dots,q_{m-1})\cdot A)=wt(Q_1\langle X \rangle h\langle X \rangle\ \mod  (X^m-1)),$$
		which proves the theorem.		
	\end{proof}

\begin{remark}
	It follows from Theorem~\ref{5l1} that the resulting \(\delta_{\theta}\)-circulant MDS matrix has order \(m=2^t\) for some positive integer $t$.
	Note that \(\delta_{\theta}\)-circulant MDS matrices of odd order do exist; however, they do not necessarily arise from the linear transformation described in Theorem~\ref{5l1}.
\end{remark}

 The following example illustrates above remark:

	\begin{example}  
		Let \(\mathbb{F}_{q^n}\) (\(n > 2\)) be a finite field, and the maps \(\theta, \delta: \mathbb{F}_{q^n} \to \mathbb{F}_{q^n}\) defined as  \(\theta(a) = a^2\) and \(\delta(a) = \beta(\theta(a) - a)\), respectively, where \(\beta \in \mathbb{F}_q\). Then, $\delta_{\theta}$-circulant matrix $A$ is defined as  
		\begin{eqnarray*}
			A&=&\delta_\theta\textup{-circ}(\alpha, 1, 1)\\
			&=&\begin{bmatrix}
				\alpha&1&1\\
				\delta(\alpha)+\theta(1)&\delta(1)+\theta(\alpha)&\delta(1)+\theta(1)\\
				\delta^2(\alpha)+\theta^2(1)&\delta^2(1)+\theta^2(\alpha)&\delta^2(1)+\theta^2(1)
			\end{bmatrix}_{3\times 3},
		\end{eqnarray*}  
		where \(\alpha\) is a root of the generating polynomial of \(\mathbb{F}_{2^n}\). This can be easily verified that $A$ is an MDS matrix,  for \(n > 4\).
	\end{example}  
	
	The following theorem outlines the conditions on  polynomial, under which a $\delta_\theta$-circulant matrix is involutory:
	\begin{theorem}  
		Let \( h\langle X \rangle = (X^{m} - 1) + \sum_{i=0}^{m - 1} h_i X^i \in \mathbb{F}_{q}[X; \theta, \delta] \) and \( g\langle X \rangle = (X^{m} - 1) + \sum_{i=0}^{m - 1} g_i X^i \in \mathbb{F}_{q}[X; \theta, \delta] \) be monic polynomials of degree \(m\). Let \( C_{h,\theta,\delta} \) and \( C_{g,\theta,\delta} \) be the \(\delta_\theta\)-circulant matrices associated with polynomials \( h \) and \( g \), respectively. Then,  
		\[
		C_{g,\theta,\delta} \cdot C_{h,\theta,\delta} = I_{m} \quad \text{if and only if} \quad 1 \equiv g\langle X \rangle * h \langle X \rangle \ \mod _*(X^{m} - 1).
		\]
	\end{theorem}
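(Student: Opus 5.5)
The plan is to deduce the statement from Theorem~\ref{5l1}, which identifies each $\delta_\theta$-circulant matrix with a right-multiplication operator on the quotient $R:=\mathbb{F}_q[X;\theta,\delta]/(X^m-1)$. We work under the standing hypotheses of this section ($m=2^t$, $\theta^m=\mathrm{id}$, $\delta\circ\theta=\theta\circ\delta$), so that $J=(X^m-1)$ is two-sided and $R$ is a ring. For $f\in R$, let $\psi_f(Q)=Q*f$. Theorem~\ref{5l1} says that, in the basis $\{1,X,\dots,X^{m-1}\}$ and with coordinate row vectors, $C_{h,\theta,\delta}$ is the matrix of $\psi_h$: the row vector of $Q$ times $C_{h,\theta,\delta}$ is the row vector of $Q*h \bmod_*(X^m-1)$. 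The same holds for $g$.

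\textbf{Products correspond to products.} For any $Q$ we have $(Q*g)*h=Q*(g*h)$ by associativity in $R$. The vector of $Q$ times $C_{g,\theta,\delta}$ is the vector of $Q*g$; multiplying that by $C_{h,\theta,\delta}$ gives the vector of $(Q*g)*h$. Hence $C_{g,\theta,\delta}\cdot C_{h,\theta,\delta}$ is the matrix of $\psi_{g*h}$. Right multiplication is well defined on $R$ exactly because $J$ is a two-sided ideal. This is why the product of the matrices appears in the order $g$ then $h$, matching the order of the polynomials $g*h$.

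\textbf{Translating to identity.} It remains to show that the matrix of $\psi_f$ is $I_m$ if and only if $f\equiv 1 \bmod_*(X^m-1)$.
\begin{itemize}
\item If $f\equiv1$, then $\psi_f(Q)=Q$ for all $Q$, so its matrix is $I_m$.
\item Conversely, if the matrix of $\psi_f$ is $I_m$, then $\psi_f(1)=1$. Since $\psi_f(1)=1*f=f \bmod_*(X^m-1)$, this gives $f\equiv 1$.
\end{itemize}
In both directions the representing matrix is determined by its action on the basis, in particular on the first basis vector $1$, whose image is the reduced $f$ itself.

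\textbf{Main obstacle.} The delicate point is the identification in Theorem~\ref{5l1}. That theorem attaches $C_{h,\theta,\delta}$ to $h=(X^m-1)+\sum h_iX^i$, which is congruent to $\sum h_iX^i$ modulo $J$. I must also check that the matrix of $\psi_{g*h}$ can legitimately be used even though $g*h$ need not have the same ``$(X^m-1)+\cdots$'' shape. The correspondence argument above handles this, since it only uses the operator $\psi_{g*h}$ and never writes $g*h$ as a generating polynomial.

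I also need to confirm that $\psi$ is $\mathbb{F}_q$-linear for the left scalar action. Here $a(Q*f)=(aQ)*f$ holds, so $\psi$ is left-linear, and the matrix convention with row vectors multiplied on the right is consistent. The two-sidedness of $J$ is assumed in the setup preceding Theorem~\ref{5l1}, so I would cite it rather than reprove it.
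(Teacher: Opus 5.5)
Your proposal is correct and follows essentially the same route as the paper. Both arguments identify $C_{g,\theta,\delta}$ and $C_{h,\theta,\delta}$ with the right-multiplication maps on $\mathbb{F}_q[X;\theta,\delta]/(X^m-1)$ via Theorem~\ref{5l1}, observe that $C_{g,\theta,\delta}\cdot C_{h,\theta,\delta}$ represents $Q\mapsto Q*g*h$, and conclude that this is the identity exactly when $g*h\equiv 1$. Your version is a bit more careful than the paper's in two places: you derive the order of the matrix product directly from the row-vector convention, where the paper cites a textbook fact stated in the column convention, and you make the evaluation at $Q=1$ in the converse explicit.
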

	\begin{proof}  
		Let $$\phi_h, \phi_g: \frac{\mathbb{F}_{q}[X; \theta, \delta]}{(X^{m} - 1)}   \longrightarrow \frac{\mathbb{F}_{q}[X; \theta, \delta]}{(X^{m} - 1)},$$  be the maps. For  $U \langle X \rangle \in  \frac{\mathbb{F}_{q}[X; \theta, \delta]}{(X^{m} - 1)} $, we have
		\[
		\phi_h(U \langle X \rangle) = U \langle X \rangle * h\langle X \rangle \quad \text{and} \quad \phi_g(U \langle X \rangle) = U \langle X \rangle * g\langle X \rangle.
		\]  
		Following Theorem \ref{5l1}, we have the matrices \( C_{h,\theta,\delta} \) and \( C_{g,\theta,\delta} \) associated with the maps \(\phi_h\) and \(\phi_g\), respectively.
		To demonstrate the relationship between the maps \(\phi_h\) and \(\phi_g\), we first consider the composition map  
		\[
		\phi_h \circ \phi_g : \frac{\mathbb{F}_{q}[X; \theta, \delta]}{(X^{m} - 1)}  \longrightarrow \frac{\mathbb{F}_{q}[X; \theta, \delta]}{(X^{m} - 1)} ,
		\]  
		defined as  
		\[
		\phi_h \circ \phi_g(U \langle X \rangle) \longrightarrow U \langle X \rangle * g \langle X \rangle * h \langle X \rangle.
		\]  
		
		\noindent Now, assume that \( g\langle X \rangle * h\langle X \rangle = 1 \ \mod _* (X^{m} - 1) \). Then, for any polynomial \( P\langle X \rangle \in \frac{\mathbb{F}_{q}[X; \theta, \delta]}{(X^{m} - 1)} \), we have
		\begin{eqnarray*}
			\phi_h \circ \phi_g(P \langle X \rangle) &=& \phi_h(P \langle X \rangle * g \langle X \rangle)\\& =& P \langle X \rangle * g \langle X \rangle * h \langle X \rangle \ \mod _* (X^{m} - 1)\\&=& P \langle X \rangle \ \mod _*(X^{m} - 1).
		\end{eqnarray*}

		\noindent Next, we compute the image of the basis set \( B =\{1,~X,~X^2,\dots,~X^{m-1}\}\) under the composition map \(\phi_h \circ \phi_g\)
		\[
		\phi_h\circ\phi_g(1) = 1 \ \mod _* (X^{m} - 1),  
		\]  
		\[
		\phi_h \circ \phi_g(X) = X \ \mod _* (X^{m} - 1),  
		\]  
		\[
		\vdots  
		\]  
		\[
		\phi_h \circ \phi_g(X^{m-1}) = X^{m-1} \ \mod _* (X^{m} - 1).
		\]  
		
		\noindent This establishes that \( [\phi_h \circ \phi_g]_B = I_{m} \), or equivalently,  
		\[
		C_{g,\theta,\delta} \cdot C_{h,\theta,\delta} = I_{m}.
		\]  
		
		\noindent Conversely, suppose that \( C_{g,\theta,\delta} \cdot C_{h,\theta,\delta} = I_{m} \). In view of \cite[Page 67]{Friedberg2013linear}, we have  
		\[
		C_{g,\theta,\delta} \cdot C_{h,\theta,\delta} = [\phi_h \circ \phi_g]_B = I_{m}.
		\]  
		This implies that \(\phi_h \circ \phi_g(X^i) = X^i\) for all \(i = 0, 1, \dots, m - 1\). Thus, we conclude that  
		\[
		\phi_h \circ \phi_g = I|_{\frac{\mathbb{F}_{q}[X; \theta, \delta]}{(X^{m} - 1)}}, \textup{ where } I|_{\frac{\mathbb{F}_{q}[X; \theta, \delta]}{(X^{m} - 1)}} \textup{ is an identity map.}
		\]  
		
		\noindent For any polynomial \( P \langle X \rangle \in \frac{\mathbb{F}_{q}[X; \theta, \delta]}{(X^{m} - 1)}  \), we have  
		\[
		\phi_h \circ \phi_g(P \langle X \rangle) = P \langle X \rangle * g \langle X \rangle * h \langle X \rangle \ \mod  (X^{m} - 1) = P \langle X \rangle \ \mod _* (X^{m} - 1).
		\]  
		This shows  that \( g \langle X \rangle * h \langle X \rangle = 1 \ \mod _* (X^{m} - 1) \).  
	\end{proof}

We conclude this section with a brief summary of our results and by posing an open problem. 
It is well known that every $\theta$-derivation on $\mathbb{F}_q$ admits a representation of the form
\(
\delta(a)=\beta\big(\theta(a)-a\big) \text{ for all } a\in\mathbb{F}_q,
\)
for some $\beta\in\mathbb{F}_q$. 
In this section, we investigated the conditions under which a $\delta_{\theta}$-circulant matrix satisfies the MDS property. 
We obtained a partial characterization in the case where $\beta\in (\mathbb{F}_q)^\theta$. More precisely, under this assumption, Theorem~\ref{5l1} yields an explicit form  of the 
$\delta_{\theta}$-circulant matrix associated with the corresponding linear transformation, namely,
\[
\small{
	\begin{bmatrix}
		h_{0} & h_{1} & \cdots & h_{m-1} \\[1mm]
		\sum_{i=0}^{1} \binom{1}{i}\,\delta^{\,1-i}\theta^i\big(\sigma^i(h_0)\big) &
		\sum_{i=0}^{1} \binom{1}{i}\,\delta^{\,1-i}\theta^i\big(\sigma^i(h_1)\big) &
		\cdots &
		\sum_{i=0}^{1} \binom{1}{i}\,\delta^{\,1-i}\theta^i\big(\sigma^i(h_{m-1})\big) \\
		\vdots & \vdots & \ddots & \vdots \\
		\sum_{i=0}^{m-1} \binom{m-1}{i}\,\delta^{\,m-1-i}\theta^i\big(\sigma^i(h_0)\big) &
		\sum_{i=0}^{m-1} \binom{m-1}{i}\,\delta^{\,m-1-i}\theta^i\big(\sigma^i(h_1)\big) &
		\cdots &
		\sum_{i=0}^{m-1} \binom{m-1}{i}\,\delta^{\,m-1-i}\theta^i\big(\sigma^i(h_{m-1})\big)
	\end{bmatrix}_{m\times m}.
}
\]
Moreover, in Theorem~\ref{wtmds} we derived necessary conditions for such matrices to possess the MDS property. 
However, when $\beta\notin (\mathbb{F}_q)^\theta$, the above form  $\delta_{\theta}$-circulant matrix is no longer available, 
and a complete characterization of MDS $\delta_{\theta}$-circulant matrices remains unknown.  This motivates the following open problem:
\begin{prob}\label{op:dtheta}
		Let $h\langle X \rangle=(X^m-1)+\sum_{i=0}^{m-1}h_iX^i\in \mathbb{F}_{q}[X;\theta,\delta]$. Then, $\delta_\theta\)-circ\((h_0, h_1, \dots, h_{m-1})$ is MDS if and only if for all $Q_1\langle X \rangle\in \mathbb{F}_{q}[X;\theta,\delta]$, we have,
	$$wt(Q_1\langle X \rangle)+wt(Q_1\langle X \rangle h\langle X \rangle\ \mod  (X^m-1))\geq m+1.$$
\end{prob}

\section{\textbf{Classification and equivalence of quasi recursive MDS matrices}}
In this section, we present some fundamental results on the similarity and equivalence of recursive MDS matrices. To establish these results, first we  fix some notions that will be used throughout this section. Let \(\theta: \mathbb{F}_{q} \longrightarrow \mathbb{F}_{q}\) be an automorphism and \(\delta: \mathbb{F}_{q} \longrightarrow \mathbb{F}_{q}\) be a \(\theta\)-derivation such that \(\delta \circ \theta = \theta \circ \delta\). 

Now, we start this section with the following definitions:
\begin{definition}
	Two matrices \( M \) and \( M' \) in \( M(m,\mathbb{F}_q) \) are called diagonally similar if there exists a  diagonal matrix $D\in$ \( GL(m,\mathbb{F}_{q}) \) such that \( M' = D^{-1} M D \).
\end{definition}
\begin{definition}
	Two matrices \( M \) and \( M' \) in \( M(m,\mathbb{F}_q) \) are called permutation similar if there exist permutation matrix \( P \in P(m,\mathbb{F}_2) \) such that  
	\( M' = P M P^{-1} \).
\end{definition}
To define a quasi recursive MDS matrices, we define some notations. For an element $a\in \mathbb{F}_{q}$, we use the notation
\begin{eqnarray}\label{5eqn155}
	a^{\langle i \rangle}=\sum_{k=0}^{i}\binom{i}{k}\delta^{i-k}\theta^{k}(a).
\end{eqnarray}

\noindent We introduce the notation \( a^{\langle i \rangle} \), which plays a crucial role in our analysis. For any two positive integers \(s,t\) such that \(1 \leq s,t \leq m\), let \(A=(a_{s,t})\) be a matrix of order \(m\). We define the matrix \(A^{\langle i \rangle}\) by
\begin{eqnarray}\label{5eqn156}
	A^{\langle i \rangle} &=& \bigl(a^{\langle i \rangle}_{s,t}\bigr).
\end{eqnarray}

If we take $\delta=0$, then  (\ref{5eqn155}) and (\ref{5eqn156})  yields 
\begin{eqnarray}\label{5eqn157}
	a^{[i]}&=&\theta^i(a), \quad A^{[i]}=(\theta^i(a_{s,t})).
\end{eqnarray}
The matrix powers \(A^{\langle i\rangle}\) defined in \eqref{5eqn156} and \eqref{5eqn157} are different from the usual powers \(A^{i}\) obtained via the classical matrix product.

Motivated by the definition of recursive MDS matrix ($r$-MDS matrix), we introduce the definition of quasi recursive MDS matrix (quasi $r$-MDS matrix) involving $\theta$-derivation.
\begin{definition}
	Let $r$ be a positive integer. A matrix $M$ is said to be quasi recursive MDS or quasi $r$- MDS if the matrix $M^{\langle r-1 \rangle }\cdot M^{\langle r-2 \rangle }\cdots M^{\langle 1 \rangle}\cdot M$ is MDS. If $M$ is quasi $r$-MDS, then we say $M$ yields an MDS matrix. Moreover, if $M$ is companion matrix, then we say $M$ is a quasi $r$-MDS companion matrix.
\end{definition}

	Now, we present various auxiliary results, before providing the main results of this section. 
	\begin{proposition}
		Let $A=(a_{s,t})$ and $B=(b_{s,t})$ be two matrices of order $m$ over a finite field $\mathbb{F}_{q}$. Then, 
		$$(A^{\langle i \rangle })^{\langle j \rangle }=A^{\langle i+j \rangle }.$$
		\begin{proof}
			Let $A=(a_{s,t})$ and $B=(b_{s,t})$ be two matrices of order $m$ over  $\mathbb{F}_{q}$. We have to show that $(A^{\langle i \rangle })^{\langle j \rangle }=A^{\langle i+j \rangle }.$  For this, it suffices to show that,   for $a\in \mathbb{F}_{q}$, we have  $(a^{\langle i \rangle })^{\langle j \rangle }=a^{\langle i+j \rangle }.$ Now, we compute
			\begin{eqnarray}\label{5eqn57}
				a^{\langle i+j \rangle}&=& \sum_{k=0}^{i+j}\binom{i+j}{k}\delta^{i+j-k}\theta^{k}(a),
			\end{eqnarray}
			
			\begin{eqnarray}\label{5eqn58}
				a^{\langle i \rangle}&=& \sum_{k=0}^{i}\binom{i}{k}\delta^{i-k}\theta^{k}(a).
			\end{eqnarray}
			Now, for any positive integer $j$,  we compute
			\begin{eqnarray}\label{5eqn59}
				(a^{\langle i \rangle})^{\langle j \rangle }&=& \sum_{k=0}^{j}\binom{j}{k}\delta^k\theta^{j-k}
				\Bigg(\sum_{k'=0}^{i}\binom{i}{{k'}}\delta^{i-k'}\theta^{k'}(a)\Bigg)\nonumber\\
				&=&\sum_{k=0}^{i+j}\Bigg(\sum_{l+t=k}^{}\binom{j}{l}\binom{i}{t}\Bigg)\delta^k\theta^{i+j-k}(a).
			\end{eqnarray}
			Since,
			\begin{eqnarray}\label{5eqn60}
				\sum_{l+t=k}^{}\binom{j}{l}\binom{i}{t}=\binom{i+j}{k},
			\end{eqnarray}
			so from (\ref{5eqn57}), (\ref{5eqn59})  and (\ref{5eqn60}), we have
			$(a^{\langle i \rangle })^{\langle j \rangle }=a^{\langle i+j \rangle }.$ This implies that $(A^{\langle i \rangle })^{\langle j \rangle }=A^{\langle i+j \rangle }.$
		\end{proof}
	\end{proposition}
	
	\begin{example}\label{5example001}
		Let \(\mathbb{F}_{2^4} = \mathbb{F}_2(\alpha)\), where \(\alpha^4 + \alpha + 1 = 0\). Let \(\theta: \mathbb{F}_{2^4} \longrightarrow \mathbb{F}_{2^4}\) be the automorphism defined by \(\theta(a) = a^4\) and let the \(\theta\)-derivation \(\delta: \mathbb{F}_{2^4} \longrightarrow \mathbb{F}_{2^4}\) be given by \(\delta(a) = (\alpha^2+\alpha) (\theta(a) - a)\). Let $A=\begin{bmatrix}
			1&0\\
			\alpha&\alpha
		\end{bmatrix}_{2\times 2}$ and $B=\begin{bmatrix}
			0&1\\
			\alpha&\alpha
		\end{bmatrix}_{2\times 2}$. In this case, we obtain $(AB)^{\langle 1\rangle}\neq A^{\langle 1\rangle}B^{\langle 1\rangle}.$			
	\end{example}
	
Example~\ref{5example001} shows that the product defined in \eqref{5eqn156} does not necessarily distribute over matrix multiplication.
 That is,
\[
(AB)^{\langle i\rangle}\neq A^{\langle i\rangle}B^{\langle i\rangle}.
\]

	\begin{proposition}
		Let $a\in (\mathbb{F}_q)^\theta$. Then,
		$$\delta^k(ab)=a\delta^k(b)~\textup{for all }b \in \mathbb{F}_q.$$
	\end{proposition}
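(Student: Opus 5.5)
The plan is to prove the identity by induction on $k\geq 0$. Everything rests on the case $k=1$, which follows directly from the defining rule of a $\theta$-derivation.

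For $k=0$ the statement reads $ab=ab$, so there is nothing to prove. For $k=1$, fix $a\in(\mathbb{F}_q)^\theta$ and $b\in\mathbb{F}_q$. The product rule gives
\[
\delta(ab)=\delta(a)\,b+\theta(a)\,\delta(b).
\]
Since $\theta(a)=a$, the second term equals $a\delta(b)$. It therefore suffices to show $\delta(a)=0$. There are two ways to see this.
\begin{itemize}
\item By the characterization quoted from \cite{Cohn1985free}, $\delta(x)=\beta(\theta(x)-x)$ for some $\beta\in\mathbb{F}_q$. Hence $\delta(a)=\beta(\theta(a)-a)=0$.
\item Intrinsically, the product rule applied to $\delta(1)=\delta(1\cdot 1)$ gives $\delta(1)=0$. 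Also, $a$ lies in the finite field $(\mathbb{F}_q)^\theta$, so either $a=0$ (trivial) or $a^{n}=1$ for some $n\geq 1$. Expanding with the product rule and using $\theta(a)=a$ yields $0=\delta(a^n)=n\,a^{n-1}\delta(a)$. This forces $\delta(a)=0$ only when $n$ is odd in characteristic $2$. Since $|(\mathbb{F}_q)^\theta|-1$ is odd, we may choose $n$ to be that order, so the argument goes through.
\end{itemize}
The first route is cleaner, and I would use it. Either way, $\delta(ab)=a\delta(b)$.

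For the inductive step, assume $\delta^k(ab)=a\delta^k(b)$ for all $b\in\mathbb{F}_q$. Then
\[
\delta^{k+1}(ab)=\delta\big(\delta^k(ab)\big)=\delta\big(a\,\delta^k(b)\big)=a\,\delta\big(\delta^k(b)\big)=a\,\delta^{k+1}(b),
\]
where the third equality is the case $k=1$ applied with $b$ replaced by $\delta^k(b)\in\mathbb{F}_q$. This completes the induction.

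The only point needing care is justifying $\delta(a)=0$ for $\theta$-fixed $a$. It cannot be read off the product rule alone without extra input, so I would rely on the explicit form $\delta=\beta(\theta-\mathrm{id})$ cited earlier in the paper. Everything else is a formal induction.
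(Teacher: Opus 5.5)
Your proof is correct and is essentially the paper's argument: the paper only says the result follows by induction on $k$, and your induction fills this in, using $\delta(a)=\beta(\theta(a)-a)=0$ for $a\in(\mathbb{F}_q)^\theta$ in the base case. Your intrinsic alternative, via $\delta(1)=0$ and $0=\delta(a^n)=n\,a^{n-1}\delta(a)$ with $n=|(\mathbb{F}_q)^\theta|-1$ odd, is also valid.
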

	\begin{proof}  
		The proof can be established using the principle of mathematical induction.  
	\end{proof}

	\begin{proposition}\label{5prop1001}
		Let $a\in (\mathbb{F}_q)^\theta$ and $b\in \mathbb{F}_q.$ Then, 
		$$(ab)^{  \langle i \rangle }=ab^{  \langle i \rangle }.$$
	\end{proposition}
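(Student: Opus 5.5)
The plan is to expand the definition of $(ab)^{\langle i\rangle}$ from (\ref{5eqn155}) and pull $a$ out of every summand, using the preceding Proposition together with the fact that $a$ is fixed by $\theta$. By definition,
\[
(ab)^{\langle i \rangle}=\sum_{k=0}^{i}\binom{i}{k}\delta^{i-k}\theta^{k}(ab).
\]
Since $\theta$ is a field automorphism and $\theta(a)=a$, we get $\theta^k(ab)=\theta^k(a)\theta^k(b)=a\,\theta^k(b)$ for every $k\ge 0$. An immediate induction on $k$ suffices here, because $\theta^{k}(a)=\theta(\theta^{k-1}(a))=\theta(a)=a$.

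Next, each summand takes the form $\delta^{i-k}\bigl(a\,\theta^k(b)\bigr)$ with $a\in(\mathbb{F}_q)^\theta$. The previous Proposition gives $\delta^{i-k}\bigl(a\,c\bigr)=a\,\delta^{i-k}(c)$ for all $c\in\mathbb{F}_q$. Applying it with $c=\theta^k(b)$ yields $\delta^{i-k}\theta^k(ab)=a\,\delta^{i-k}\theta^k(b)$.

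It would be worth recording why that Proposition holds, since it is the only nontrivial input. The base case is $\delta(ac)=\delta(a)c+\theta(a)\delta(c)$. Here $\delta(a)=\beta(\theta(a)-a)=0$ by the characterization of $\theta$-derivations on $\mathbb{F}_q$, or more intrinsically because $\delta(a)=0$ is forced for $a$ in the prime-field-generated fixed field. Since also $\theta(a)=a$, this gives $\delta(ac)=a\delta(c)$, and the inductive step just iterates. Note that the binomial coefficients are integers, so scalar multiplication by them commutes with multiplication by $a$.

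Summing the terms and factoring out $a$ by distributivity gives
\[
(ab)^{\langle i\rangle}=a\sum_{k=0}^{i}\binom{i}{k}\delta^{i-k}\theta^{k}(b)=a\,b^{\langle i\rangle}.
\]
This is the claim. The only step needing care is the vanishing $\delta(a)=0$ for $\theta$-fixed $a$, which underlies the previous Proposition. I would take that Proposition as given, or justify it via $\delta(a)=\beta(\theta(a)-a)$.
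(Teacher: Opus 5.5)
Your proof is correct and follows the paper's argument essentially verbatim. You rewrite $\theta^k(ab)=a\,\theta^k(b)$, pass $a$ through $\delta^{i-k}$ via the preceding proposition, and factor it out of the sum; you also keep the binomial coefficients, which the paper's displayed computation silently drops. One small quibble: your ``more intrinsic'' aside about a ``prime-field-generated fixed field'' is garbled, since $(\mathbb{F}_q)^\theta$ need not be the prime field, but the justification you actually rely on, $\delta(a)=\beta(\theta(a)-a)=0$, is sound.
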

	\begin{proof}
		Given $a\in (\mathbb{F}_q)^\theta$ and $b\in \mathbb{F}_q.$ Then, we compute
		\begin{eqnarray*}
			(ab)^{  \langle i \rangle }&=&\sum_{k=0}^{i}\binom{i}{k}\delta^{i-k}\theta^k(ab)\\
			&=&\sum_{k=0}^{i}\delta^{i-k}(a\theta^k(b))\\
			&=&a\sum_{k=0}^{i}\delta^{i-k}\theta^k(b)\\
			&=&ab^{  \langle i \rangle }.
		\end{eqnarray*}
	\end{proof}
	\begin{proposition}\label{5p001}
		Let $A\in M(m,\mathbb{F}_q)$, and $D$ be any diagonal matrix of order $m$ over $(\mathbb{F}_q)^\theta.$ Then, 
		\begin{enumerate}
			\item[(i)] $(DA)^{  \langle t \rangle }=D^{  \langle t \rangle }A^{  \langle t \rangle }$
			\item[(ii)]$(AD)^{  \langle t \rangle }=A^{  \langle t \rangle }D^{  \langle t \rangle }.$
		\end{enumerate}
		
	\end{proposition}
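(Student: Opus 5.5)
The plan is to verify the two identities entrywise, using that the operation $A\mapsto A^{\langle t\rangle}$ of \eqref{5eqn156} acts independently on each entry. Write $D=\mathrm{diag}(d_1,\dots,d_m)$ with every $d_s\in(\mathbb{F}_q)^\theta$, and $A=(a_{s,u})$.

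For (i), the $(s,u)$ entry of $DA$ is $d_s a_{s,u}$. Since $d_s\in(\mathbb{F}_q)^\theta$, Proposition~\ref{5prop1001} gives
\[
(DA)^{\langle t\rangle}_{s,u}=(d_s a_{s,u})^{\langle t\rangle}=d_s\, a_{s,u}^{\langle t\rangle}.
\]
We must still identify $d_s$ with the diagonal entry of $D^{\langle t\rangle}$. The off-diagonal entries of $D$ are $0$, and $0^{\langle t\rangle}=0$ by additivity of $\theta$ and $\delta$, so $D^{\langle t\rangle}$ is again diagonal with entries $d_s^{\langle t\rangle}$.

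The key point is the value of $d_s^{\langle t\rangle}$. Because $\theta(d_s)=d_s$ and $\delta(a)=\beta(\theta(a)-a)$, we get $\delta(d_s)=0$, hence $\delta^{j}(d_s)=0$ for every $j\ge 1$. Therefore
\[
d_s^{\langle t\rangle}=\sum_{k=0}^{t}\binom{t}{k}\delta^{t-k}\theta^k(d_s)=\theta^t(d_s)=d_s .
\]
This also follows directly from the definition of a $\theta$-derivation: $\delta(d)\cdot 1$ with $d$ fixed can be reduced to zero via $\delta(1)=0$ together with the characterization, and I would cite Cohn's form to be safe. It follows that $D^{\langle t\rangle}=D$, and so
\[
\bigl(D^{\langle t\rangle}A^{\langle t\rangle}\bigr)_{s,u}=d_s\, a_{s,u}^{\langle t\rangle},
\]
which matches the expression above and proves (i).

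For (ii), the argument is identical. The $(s,u)$ entry of $AD$ is $a_{s,u}d_u$, so by commutativity and Proposition~\ref{5prop1001},
\[
(AD)^{\langle t\rangle}_{s,u}=(d_u a_{s,u})^{\langle t\rangle}=a_{s,u}^{\langle t\rangle}\, d_u=\bigl(A^{\langle t\rangle}D^{\langle t\rangle}\bigr)_{s,u}.
\]

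The only delicate step is justifying $D^{\langle t\rangle}=D$, that is, $\delta$ vanishing on the fixed field. If one prefers to avoid Cohn's characterization, it suffices to use the computation already implicit in Proposition~\ref{5prop1001}: apply it with $b=1$, which gives $d^{\langle t\rangle}=d\cdot 1^{\langle t\rangle}$. Then $1^{\langle t\rangle}=1$, because $\delta(1)=\delta(1\cdot 1)=2\delta(1)$ forces $\delta(1)=0$.
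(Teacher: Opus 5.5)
Your proposal is correct and follows the same route as the paper: compute $(DA)^{\langle t\rangle}$ and $(AD)^{\langle t\rangle}$ entrywise and apply Proposition~\ref{5prop1001} to each product $d_s a_{s,u}$. The one difference is that you explicitly justify $D^{\langle t\rangle}=D$, i.e.\ that $\delta$ vanishes on $(\mathbb{F}_q)^\theta$, which the paper uses silently when it replaces $d_i^{\langle t\rangle}$ by $d_i$. Your justification via Cohn's form is sound; the $b=1$ variant does not actually avoid that fact, since it inherits it from the propositions preceding Proposition~\ref{5prop1001}.
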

	\begin{proof}{$(i)$}
		Let $A\in M(m,\mathbb{F}_q)$ and $D\in M(m,(\mathbb{F}_q)^\theta)$ such that
		$$A=\begin{bmatrix}
			a_{0,0}&a_{0,1}&\dots&a_{0,m-1}\\
			a_{1,0}&a_{1,1}&\dots&a_{1,m-1}\\
			\vdots&\vdots&\ddots&\vdots\\
			a_{m-1,0}&a_{m-1,1}&\dots&a_{m-1,m-1}
		\end{bmatrix}_{m\times m}~\textup{ and } D=\begin{bmatrix}
			d_0&0&\dots&0\\
			0&d_1&\dots&0\\
			\vdots&\vdots&\ddots&\vdots\\
			0&0&\dots&d_{m-1}
		\end{bmatrix}_{m\times m}.$$
		Now, we compute 
		$$DA=\begin{bmatrix}
			d_0a_{0,0}&d_0a_{0,1}&\dots&d_0a_{0,m-1}\\
			d_1a_{1,0}&d_1a_{1,1}&\dots&d_1a_{1,m-1}\\
			\vdots&\vdots&\ddots&\vdots\\
			d_{m-1}a_{m-1,0}&d_{m-1}a_{m-1,1}&\dots&d_{m-1}a_{m-1,m-1}\\
			
		\end{bmatrix}_{m\times m}$$
		\begin{eqnarray*}
			(DA)^{  \langle t \rangle }&=&((d_ia_{ij})^{  \langle t \rangle }).
		\end{eqnarray*}
		In view of Proposition \ref{5prop1001}, we obtain
		\begin{eqnarray*}
			(DA)^{  \langle t \rangle }&=&((d_ia_{ij})^{  \langle t \rangle })\\
			&=&(d^{  \langle t \rangle }_ia^{  \langle t \rangle }_{ij})\\
			&=&(d_ia^{  \langle t \rangle }_{ij})\\
			&=&D^{\langle t \rangle}A^{\langle t \rangle}.
		\end{eqnarray*}
		This implies that 
		$$(DA)^{\langle t \rangle}= D^{\langle t \rangle}A^{\langle t \rangle}.$$
		
		$(ii)$	Similarly, we can prove that $(AD)^{  \langle t \rangle }=A^{  \langle t \rangle }D^{  \langle t \rangle }$.
	\end{proof}
	\begin{corollary}
		Let $M_1\in M(m,\mathbb{F}_q)$ and diagonal matrices $D_1,~D_2\in M(m,(\mathbb{F}_q)^\theta)$. Then,
		$$(D_1M_1D_2)^{  \langle t \rangle }=D^{  \langle t \rangle }_1M^{  \langle t \rangle }_1D^{  \langle t \rangle }_2.$$
	\end{corollary}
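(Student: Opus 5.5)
The plan is to apply Proposition~\ref{5p001} twice, using associativity of matrix multiplication. Write $D_1M_1D_2=(D_1M_1)D_2$ and set $A:=D_1M_1\in M(m,\mathbb{F}_q)$. Since $D_2$ is a diagonal matrix with entries in $(\mathbb{F}_q)^\theta$, part (ii) of Proposition~\ref{5p001} gives
\[
(D_1M_1D_2)^{\langle t\rangle}=\bigl((D_1M_1)D_2\bigr)^{\langle t\rangle}=(D_1M_1)^{\langle t\rangle}D_2^{\langle t\rangle}.
\]
The hypothesis of part (ii) only requires $A$ to be an arbitrary matrix over $\mathbb{F}_q$, which $D_1M_1$ is, so this step is legitimate.

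Next, part (i) of Proposition~\ref{5p001} applies to $D_1M_1$ directly, since $D_1$ is diagonal over $(\mathbb{F}_q)^\theta$. It gives $(D_1M_1)^{\langle t\rangle}=D_1^{\langle t\rangle}M_1^{\langle t\rangle}$. Substituting this into the display yields $(D_1M_1D_2)^{\langle t\rangle}=D_1^{\langle t\rangle}M_1^{\langle t\rangle}D_2^{\langle t\rangle}$, which is the claim.

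There is no real obstacle here. The only point to watch is that each application of Proposition~\ref{5p001} has a genuine diagonal factor over the fixed field on the correct side. One could also note that $D_i^{\langle t\rangle}=D_i$ entrywise, as in the proof of Proposition~\ref{5p001}, but this refinement is not needed for the stated identity.
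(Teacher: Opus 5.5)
Your proof is correct and matches the paper's intended argument: the paper states this corollary without proof as an immediate consequence of Proposition~\ref{5p001}, obtained exactly as you do by applying part (ii) to $(D_1M_1)D_2$ and then part (i) to $D_1M_1$. Your side remark that $D_i^{\langle t\rangle}=D_i$ is also right, because $\delta$ vanishes on $(\mathbb{F}_q)^\theta$.
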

	Similarly, we can prove the following proposition: 
	\begin{proposition}\label{5p002}
		Let $A\in M(m,\mathbb{F}_q)$ and $P,~P_1,~P_2\in P(m,\mathbb{F}_2)$. Then,
		\begin{enumerate}
			\item[(i)] $(PA)^{  \langle t \rangle }=P^{  \langle t \rangle }A^{  \langle t \rangle }$
			\item [(ii)]$(AP)^{  \langle t \rangle }=A^{  \langle t \rangle }P^{  \langle t \rangle }$
			\item[(iii)] $(P_1AP_2)^{  \langle t \rangle }=P_1^{  \langle t \rangle }A^{  \langle t \rangle }P_2^{  \langle t \rangle }.$
		\end{enumerate}
	\end{proposition}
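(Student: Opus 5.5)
The plan is to reduce everything to a scalar identity, exactly as in Proposition \ref{5p001}. A permutation matrix $P\in P(m,\mathbb{F}_2)$ has entries in $\{0,1\}$. First I would check that $0^{\langle t\rangle}=0$ and $1^{\langle t\rangle}=1$, so that $P^{\langle t\rangle}=P$. Additivity of $\theta$ and $\delta$ gives $0^{\langle t\rangle}=0$. For the other value, the product rule gives $\delta(1)=\delta(1\cdot 1)=\delta(1)+\theta(1)\delta(1)=2\delta(1)$, hence $\delta(1)=0$. Then $\delta^{t-k}\theta^k(1)=0$ for $k<t$, and the only surviving term in (\ref{5eqn155}) is $\binom{t}{t}\theta^t(1)=1$. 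Alternatively, since $0,1\in(\mathbb{F}_q)^\theta$, one can apply Proposition \ref{5prop1001} with $b=1$.

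For part $(i)$, write $P$ as the matrix of a permutation $\pi$ of $\{0,\dots,m-1\}$, so that $(PA)_{s,u}=a_{\pi(s),u}$. Since $(\cdot)^{\langle t\rangle}$ acts entrywise, $((PA)^{\langle t\rangle})_{s,u}=a_{\pi(s),u}^{\langle t\rangle}$. This is exactly the $(s,u)$ entry of $PA^{\langle t\rangle}$, which equals $P^{\langle t\rangle}A^{\langle t\rangle}$ by the first step.

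Part $(ii)$ is identical with columns permuted: $(AP)_{s,u}=a_{s,\pi'(u)}$ for some permutation $\pi'$. The key point is that left or right multiplication by $P$ only relocates entries; it never forms sums or products of entries. This avoids the failure of multiplicativity illustrated in Example \ref{5example001}.

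Part $(iii)$ follows by applying $(i)$ to $P_1(AP_2)$ and then $(ii)$ to $AP_2$. Alternatively, it follows from Proposition \ref{5p001}-style reasoning, since $P_1AP_2$ is again an entrywise relocation of $A$.

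There is no genuine obstacle. The only point needing care is the verification $1^{\langle t\rangle}=1$, which is what lets us replace $P^{\langle t\rangle}$ by $P$ and rests on $\delta(1)=0$. Without it, one would be trying to distribute $\langle t\rangle$ over a product, which fails in general.
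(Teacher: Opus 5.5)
Your argument is correct and follows essentially the paper's route. The paper only says the result follows ``similarly'' to Proposition~\ref{5p001}, i.e.\ by computing $PA$ (resp.\ $AP$) entrywise and using that the entries $0,1$ of a permutation matrix are fixed by $(\cdot)^{\langle t\rangle}$; your direct check $\delta(1)=\delta(1\cdot 1)=2\delta(1)\Rightarrow\delta(1)=0$ supplies the step $P^{\langle t\rangle}=P$ that the paper leaves implicit. One small caveat: your ``alternative'' of invoking Proposition~\ref{5prop1001} with $b=1$ is circular, since it only yields $a^{\langle t\rangle}=a\cdot 1^{\langle t\rangle}$ and says nothing about $1^{\langle t\rangle}$ itself, so rely on the direct computation instead.
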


	Based on the Proposition \ref{5p001} and \ref{5p002}, we have the following result:
	\begin{theorem}\label{5thm4.10}
		Let $\theta:\mathbb{F}_q\rightarrow \mathbb{F}_q$ be an automorphism and $\delta:\mathbb{F}_q\rightarrow\mathbb{F}_q$ be a derivation such that $\delta \circ \theta= \theta \circ  \delta.$  
		\begin{enumerate}
			\item[(i)] Suppose that two matrices $M_1,~M_2\in M(m,\mathbb{F}_q)$ are diagonally similar such that 
			$$M_2=DM_1D^{-1},~~D\in GL(m,(\mathbb{F}_q)^\theta).$$
			Then, $M_1$ is quasi $r$-MDS iff $M_2$ is a quasi $r$-MDS.
			\item[(ii)] 	 Suppose that two matrices $M_1,~M_2\in M(m,\mathbb{F}_q)$ are permutation  similar such that 
			$$M_2=PM_1P^{-1},~P\in P(m,\mathbb{F}_2)$$
			Then, $M_1$ is quasi $r$-MDS iff $M_2$ is a quasi $r$-MDS.
		\end{enumerate}
	\end{theorem}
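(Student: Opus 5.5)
Both parts follow from one structural fact: the operator $(\cdot)^{\langle i\rangle}$ commutes with conjugation by $D$ or $P$, and MDS-ness is preserved under diagonal and permutation similarity.

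For part (i), the diagonal entries of $D$ lie in $(\mathbb{F}_q)^\theta$, and so do those of $D^{-1}$, since the fixed field is a subfield. For each $i$, Proposition~\ref{5p001} (equivalently, its corollary) gives
\[
M_2^{\langle i\rangle}=(DM_1D^{-1})^{\langle i\rangle}=D^{\langle i\rangle}M_1^{\langle i\rangle}(D^{-1})^{\langle i\rangle}.
\]
Next I need $d^{\langle i\rangle}=d$ for every $d\in(\mathbb{F}_q)^\theta$. This follows from Proposition~\ref{5prop1001} with $b=1$: it gives $d^{\langle i\rangle}=d\cdot 1^{\langle i\rangle}$. Here $1^{\langle i\rangle}=\sum_k\binom{i}{k}\delta^{i-k}\theta^k(1)$, and $\delta(1)=0$ because $\delta(1)=\delta(1\cdot1)=2\delta(1)$. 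Hence only the $k=i$ term survives and $1^{\langle i\rangle}=1$. (Zero entries are harmless because $0^{\langle i\rangle}=0$.) So $D^{\langle i\rangle}=D$ and $(D^{-1})^{\langle i\rangle}=D^{-1}$, and therefore $M_2^{\langle i\rangle}=DM_1^{\langle i\rangle}D^{-1}$.

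The product then telescopes:
\[
M_2^{\langle r-1\rangle}\cdots M_2^{\langle 1\rangle}M_2=D\bigl(M_1^{\langle r-1\rangle}\cdots M_1^{\langle1\rangle}M_1\bigr)D^{-1}.
\]
A square submatrix of $DND^{-1}$ is $D_1N'D_2$, where $N'$ is the corresponding submatrix of $N$ and $D_1,D_2$ are invertible diagonal matrices. Its determinant is therefore a nonzero multiple of $\det N'$. Consequently $DND^{-1}$ is MDS if and only if $N$ is, which proves (i) in both directions (for the converse, use $D^{-1}$).

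Part (ii) runs the same way using Proposition~\ref{5p002}(iii). Permutation matrices have entries in $\{0,1\}\subseteq(\mathbb{F}_q)^\theta$, so the argument above gives $P^{\langle i\rangle}=P$ and $(P^{-1})^{\langle i\rangle}=P^{-1}$. Hence $M_2^{\langle i\rangle}=PM_1^{\langle i\rangle}P^{-1}$, and the product telescopes to $PNP^{-1}$. The matrix $PNP^{-1}$ is obtained from $N$ by permuting rows and columns simultaneously. Every square submatrix of $PNP^{-1}$ is therefore, up to row and column reordering, a square submatrix of $N$, so its determinant agrees up to sign. Hence the MDS property transfers in both directions.

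The only real obstacle is justifying the entrywise identities $(DA)^{\langle t\rangle}=DA^{\langle t\rangle}$ and $(PA)^{\langle t\rangle}=PA^{\langle t\rangle}$. The paper's propositions assert these, and I would double-check them. Each entry of $DA$ is a single product $d_ia_{ij}$ with $d_i$ fixed by $\theta$, so Proposition~\ref{5prop1001} applies; this also uses that $\delta^k(ab)=a\delta^k(b)$ for $a$ $\theta$-fixed. For permutations no computation is needed, since $PA$ merely rearranges entries and $(\cdot)^{\langle t\rangle}$ acts entrywise.
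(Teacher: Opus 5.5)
Your proposal is correct and follows the paper's route. It uses Propositions~\ref{5p001} and~\ref{5p002} to pull $D$ (resp.\ $P$) out of every factor $M_2^{\langle i\rangle}$, and then telescopes the product to $D\bigl(M_1^{\langle r-1\rangle}\cdots M_1^{\langle 1\rangle}M_1\bigr)D^{-1}$ (resp.\ the same with $P$). Beyond the paper, you make explicit two steps it leaves unstated: $D^{\langle i\rangle}=D$ and $P^{\langle i\rangle}=P$ (via $\delta(1)=0$), and the invariance of the MDS property under diagonal and permutation similarity.
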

	\begin{proof}
		$(i)$ 
		We have $M_1$, $M_2\in M(m,\mathbb{F}_q)$ are diagonally similar such that 
		$$M_2=DM_1D^{-1}.$$
		Let us suppose $M_1$ is quasi $r$-MDS matrix. This implies that, there exists positive integer $r$ such that  $M_1^{\langle r-1 \rangle }\cdot M_1^{\langle r-2 \rangle }\cdots M_1^{\langle 1 \rangle}\cdot M_1$, is a quasi $r$-MDS. Now, we compute
		\begin{eqnarray}\label{5eqn4.9}
			\noindent		M_2^{\langle r-1 \rangle }\cdot M_2^{\langle r-2 \rangle }\cdots M_2^{\langle 1 \rangle}\cdot M_2\notag&=&(DM_1D^{-1})^{\langle r-1\rangle}(DM_1D^{-1})^{\langle r-2\rangle }\cdots (DM_2D^{-1}).\\
			&~&
		\end{eqnarray}
		Since $D\in GL(m,(\mathbb{F}_q)^\theta)$, so  Proposition \ref{5p001} and (\ref{5eqn4.9}) yields
		\begin{eqnarray*}
			M_2^{\langle r-1 \rangle }\cdot M_2^{\langle r-2 \rangle }\cdots M_2^{\langle 1 \rangle}\cdot M_2&=&D	M_1^{\langle r-1 \rangle }\cdot M_1^{\langle r-2 \rangle }\cdots M_1^{\langle 1 \rangle}\cdot M_1D^{-1}.
		\end{eqnarray*}
		This shows $M_2$ is quasi $r$-MDS matrix.\\\\
		$(ii)$
		By using Proposition \ref{5p002}, one may easily prove this result by using the similar arguments that were used to prove the part $(i)$.
	\end{proof}
\section{\textbf{Characterization of polynomials that gives quasi recursive MDS matrices}}

    In the previous section, we introduced the notion of quasi $r$-MDS matrices and highlighted their role in producing MDS matrices. In this section, we focus on the analysis and construction of such matrices arising specifically from companion matrices associated with monic skew polynomials in $\mathbb{F}_q[X;\theta]$.   Following the framework of Cauchois et al.~\cite{cauchois2016direct}, we recall that a quasi recursive MDS matrix admits a factorization of the form
\[
M \;=\; C_g^{[r-1]}\, C_g^{[r-2]}\cdots C_g^{[1]} \, C_g,
\]
where $g(X)\in \mathbb{F}_q[X;\theta]$ is a monic polynomial of degree $m$, and $r\ge m$. In this case, we say that $g(X)$ yields a quasi recursive MDS matrix.

 Cauchois et al.~\cite{cauchois2016direct} established constructions satisfying the quasi-involutory identity $M\cdot M^{[m]}=I_m$. In contrast, our method yields quasi recursive MDS matrices that are  involutory. For this purpose, given any square matrix $A=(a_{s,t})$ of order $m$, and by taking $\delta=0$ in (\ref{5eqn156}), we adopt the notation
 \begin{eqnarray}\label{5eqn56}
 	A^{[i]}=(a^{[i]}_{s,t}), \qquad 1\le s,t\le m.
 \end{eqnarray}
 
We now recall the notion of $\theta$-cyclic codes over a finite field, which plays a central role in our development. A linear code $C\subseteq \mathbb{F}_q^n$ is called $\theta$-cyclic (or skew cyclic) if it is closed under the $\theta$-shift, that is,
\[
(c_0,c_1,\dots,c_{n-1})\in C
\ \Longrightarrow\
(\theta(c_{n-1}),\theta(c_0),\dots,\theta(c_{n-2}))\in C.
\]
When $|\theta|\mid n$, the polynomial $X^n-1$ lies in the center of the skew polynomial ring $\mathbb{F}_q[X;\theta]$, and therefore $\langle X^n-1\rangle$ becomes a two-sided ideal. This allows us to consider the quotient ring
\[
\frac{\mathbb{F}_q[X;\theta]}{\langle X^n-1\rangle}
=\left\{a_0+a_1X+\cdots+a_{n-1}X^{n-1}+\langle X^n-1\rangle \; ;\; a_i\in \mathbb{F}_q\right\}.
\]

Throughout this paper, we assume that $|\theta|$ divides $n$.  Consequent to the above discussion, a $\theta$-cyclic code admits two equivalent descriptions as follows:
\begin{itemize}
	\item[\textnormal{(i)}] It is an $\mathbb{F}_{q}$-submodule of $\mathbb{F}_{q}^{n}$ such that
	\[
	C:=\left\{(c_0,c_1,\dots,c_{n-1})\in \mathbb{F}_{q}^{n};\;
	(\theta(c_{n-1}),\theta(c_0),\dots,\theta(c_{n-2}))\in C\right\};
	\]
	\item[\textnormal{(ii)}] It is a left ideal in the quotient ring $\frac{\mathbb{F}_{q}[X;\theta]}{\langle X^n-1\rangle}$, i.e.,
	\[
	C:={ }^\ast\langle g(X)\rangle
	=\left\{\,a(X)g(X)\;\mid\; a(X)\in \frac{\mathbb{F}_{q}[X;\theta]}{\langle X^n-1\rangle}\right\}.
	\]
\end{itemize}

To this end, let
\(
g(X)=g_0+g_1X+\cdots+g_{m-1}X^{m-1}+X^m\in \mathbb{F}_{q}[X;\theta]
\)
be a right divisor of $X^{n}-1$, and let $C$ be the principal $\theta$-cyclic code, i.e.,
\(
C={ }^\ast\langle g(X)\rangle.
\)
A generator matrix of $C$ is given by
\begin{eqnarray}\label{Equation_gen0_Fq}
	G=\begin{bmatrix}
		g(X)\\
		X\ast g(X)\\
		\vdots\\
		X^{n-m-1}\ast g(X)
	\end{bmatrix}_{n-m\times n}.
\end{eqnarray}

In systematic form, this generator matrix can be rewritten as
\begin{eqnarray}\label{Equation_gen_Fq}
	\begin{bmatrix}
		- X^m\bmod_\ast g & 1 & 0 & \cdots & 0 \\
		- X^{m+1} \bmod_\ast g & 0 & 1 & \cdots & 0 \\
		\vdots & \vdots & \vdots & \ddots & \vdots \\
		- X^{m-1} \bmod_\ast g & 0 & 0 & \cdots & 1
	\end{bmatrix}_{n-m\times n}.
\end{eqnarray}

In particular, if we take $n=2m$, then (\ref{Equation_gen_Fq}) reduces to
\[
\begin{bmatrix}
	- X^m \bmod_\ast g & 1 & 0 & \cdots & 0 \\
	- X^{m+1} \bmod_\ast g & 0 & 1 & \cdots & 0 \\
	\vdots & \vdots & \vdots & \ddots & \vdots \\
	- X^{2m-1} \bmod_\ast g & 0 & 0 & \cdots & 1
\end{bmatrix}_{m\times 2m}
=
\begin{bmatrix}
	g_0&g_1&g_2&\dots&g_{m-1}&1&0&\dots&0\\
	0&g^{[1]}_0&g^{[1]}_1&\dots&g^{[1]}_{m-2}&g^{[1]}_{m-1}&1&\dots&0\\
	\vdots&\vdots&\vdots&\ddots&\vdots&\vdots\\
	0&0&0&\dots&g^{[m-1]}_0&g^{[m-1]}_1&\dots&g^{[m-1]}_{m-1}&1\\
\end{bmatrix}_{m\times 2m}.
\]

Now, isolating the redundant part, we obtain the matrix
\begin{eqnarray}\label{5eqn53}
	N_g=\begin{bmatrix}
		X^m \bmod_\ast g(x) \\
		X^{m+1} \bmod_\ast g(x) \\
		\vdots \\
		X^{2m-1} \bmod_\ast g(x)
	\end{bmatrix}_{m\times m}.
\end{eqnarray}

\noindent To analyze $N_g$, it is convenient to introduce the companion matrix associated with the polynomial $g(X)$.

\begin{definition}
	Let
	\(
	g(X)=g_0+g_1X+g_2X^2+\cdots+g_{m-1}X^{m-1}+X^m\in \mathbb{F}_{q}[X;\theta]
	\)
	be a monic polynomial of degree $m$ over the finite field $\mathbb{F}_{q}$.
	The companion matrix of $g(X)$, denoted by $C_g$, is defined as
	\[
	C_g=
	\begin{bmatrix}
		0 & 1 & 0 & \cdots  & 0 \\
		0 & 0 & 1 & \cdots  & 0 \\
		\vdots & \vdots & \vdots & \ddots & \vdots \\
		0 & 0 & 0 & \cdots  & 1 \\
		-g_0 & -g_1 & -g_2 & \cdots  &- g_{m-1}
	\end{bmatrix}_{m\times m}.
	\]
\end{definition}

For each integer $i\ge 0$, we denote by $C_g^{[i]}$ the matrix obtained by applying $\theta^i$ to every entry of $C_g$. Explicitly,
\[
C_g^{[i]}=
\begin{bmatrix}
	0 & 1 & 0 & \cdots  & 0 \\
	0 & 0 & 1 & \cdots & 0 \\
	\vdots & \vdots & \vdots & \ddots &  \vdots \\
	0 & 0 & 0 & \cdots &  1 \\
	-\theta^i(g_0) &- \theta^i(g_1) &- \theta^i(g_2) & \cdots &- \theta^i(g_{m-1})
\end{bmatrix}_{m\times m}.
\]

\begin{eqnarray}
	(C_{g^{[i]}})^{-1}=\begin{bmatrix}
		\frac{\theta^{i}(g_1)}{\theta^{i}(g_0)}&\frac{\theta^{i}(g_2)}{\theta^{i}(g_0)}&\cdots&\frac{\theta^{i}(g_{m-1
			})}{\theta^{i}(g_0)}&\frac{1}{\theta^{i}(g_0)}\\
		1&0&\cdots&0&0\\
		\vdots&\vdots&\ddots&\vdots&\vdots\\
		0&0&\cdots&1&0\\
		
	\end{bmatrix}_{m\times m}.
\end{eqnarray}

Note that, if $A$ is a square matrix of order $k$, say $A=(a_{ij})$ with $a_{ij}\in \mathbb{F}_{q}$ for $1\le i,j\le m$, then
\(
A^{[t]}=(\theta^t(a_{ij}))
\)
for some $t\in \mathbb{N}\cup \{0\}$.


In 2016, Cauchois et al. \cite{cauchois2016direct}, proved some multiplication rules for the matrix of the form (\ref{5eqn56}) as follows:
\begin{proposition}\cite[Proposition 1]{cauchois2016direct}\label{5prop13}
	Let $A=(a_{s,t})$ and $B=(b_{s,t})$ be two matrices of order $m$ over a finite field $\mathbb{F}_{q}$. Then, we have
	\begin{enumerate}
		\item[(i)] $(A^{[i]})^{[j]}=A^{[i+j]}.$
		\item[(ii)] $(AB)^{[i]}=A^{[i]}B^{[i]}.$
	\end{enumerate}
\end{proposition}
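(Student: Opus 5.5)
Both identities can be checked entrywise: $(\cdot)^{[i]}$ acts on a matrix by applying $\theta^i$ to each entry, and $\theta$ is a field automorphism. No induction on matrix size is needed.

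For part (i), I will fix an entry $a_{s,t}$ of $A$. By definition the $(s,t)$ entry of $A^{[i]}$ is $\theta^i(a_{s,t})$. The $(s,t)$ entry of $(A^{[i]})^{[j]}$ is then
\[
\theta^j\big(\theta^i(a_{s,t})\big)=(\theta^j\circ\theta^i)(a_{s,t})=\theta^{i+j}(a_{s,t}).
\]
This uses only associativity of composition and the fact that the powers of $\theta$ form a cyclic group. The right-hand side is exactly the $(s,t)$ entry of $A^{[i+j]}$. Equivalently, this is the case $\delta=0$ of the earlier proposition $(A^{\langle i\rangle})^{\langle j\rangle}=A^{\langle i+j\rangle}$: every term with a positive power of $\delta$ vanishes, and only $\binom{i+j}{i+j}\theta^{i+j}$ survives.

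For part (ii), I will write the $(s,t)$ entry of $AB$ as $\sum_{k}a_{s,k}b_{k,t}$ and apply $\theta^i$ to it. Since $\theta^i$ is a ring homomorphism of $\mathbb{F}_q$, it is additive and multiplicative, so
\[
\theta^i\Big(\sum_k a_{s,k}b_{k,t}\Big)=\sum_k\theta^i(a_{s,k})\,\theta^i(b_{k,t}).
\]
The right-hand side is the $(s,t)$ entry of $A^{[i]}B^{[i]}$, which gives the identity.

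There is no real obstacle. The one point to stress is why (ii) holds here, when Example~\ref{5example001} showed that $(AB)^{\langle i\rangle}\neq A^{\langle i\rangle}B^{\langle i\rangle}$ in general. When $\delta\neq 0$, the map $a\mapsto a^{\langle i\rangle}$ is additive but not multiplicative, since $\delta$ satisfies a twisted Leibniz rule rather than preserving products. When $\delta=0$, the map reduces to $\theta^i$, which is multiplicative, and the argument goes through. I will include this remark so the reader sees exactly where $\delta=0$ is used.
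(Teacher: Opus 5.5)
Your entrywise argument is correct and is the standard proof: $\theta^j\circ\theta^i=\theta^{i+j}$ gives (i), and $\theta^i$ being a ring homomorphism gives (ii). The paper itself gives no proof here; it only cites Cauchois et al.

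One small correction to your closing remark. ``Not multiplicative when $\delta\neq 0$'' should read ``not multiplicative in general.'' Take the paper's first example, $\delta(a)=\theta(a)-a$ in characteristic $2$. Then $\delta+\theta=\mathrm{id}$, so $a^{\langle i\rangle}=a$ for all $a$, and the map is trivially multiplicative even though $\delta\neq 0$.
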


Following the approach of Cauchois et al.~\cite[Theorem~2]{cauchois2016direct}, the next lemma represents the redundant matrix
\(N_g\) in \eqref{5eqn53} as a structured product of successive
\(\theta\)-twists of the companion matrix. We are now in a position to establish one of the main results of this section, namely, the construction of involutory MDS matrices via the quasi recursive method.

\begin{theorem}\label{5t1}
	Let $g  \langle X \rangle=g_0+g_1X+\cdots+g_{m-1}X^{m-1}+X^m\in\mathbb{F}_{q^{2m}}[X;\theta]$ be a polynomial of degree $m$ and an automorphism $\theta:\mathbb{F}_{q^{2m}}\longrightarrow \mathbb{F}_{q^{2m}}$, defined as $\theta(a)=a^{q^2}.$ Then, $N_g$ satisfies the quasi recursive property, i.e.,
	$$	N_g=C_g^{[m-1]}\cdot C_g^{[m-2]}\cdots C_g^{[1]}\cdot C_g.$$
	Moreover, if $g$ is the generator polynomial of a  MDS code, then $N_g$ is MDS and satisfies the involutory property, i.e.,
	$	N^{2}_g=I_m.$
\end{theorem}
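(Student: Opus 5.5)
The plan is to track the remainders $X^k \bmod_\ast g$ row by row and show that passing from $k$ to $k+1$ amounts to applying $\theta$ entrywise and then multiplying on the right by $C_g$. First I would record that $\theta(a)=a^{q^2}$ on $\mathbb{F}_{q^{2m}}$ has order exactly $m$, since $(q^2)^m=q^{2m}$. Hence $X^m a=\theta^m(a)X^m=aX^m$, so $X^m$, and with it $X^{2m}-1$, is central in $\mathbb{F}_{q^{2m}}[X;\theta]$. For $k\ge 0$ let $R_k$ be the $m\times m$ matrix whose $i$-th row ($0\le i\le m-1$) is the coefficient vector of $X^{k+i}\bmod_\ast g$. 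Then $R_0=I_m$ and $R_m=N_g$ by \eqref{5eqn53}.

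\textbf{Step 1 (the recursion).} Suppose $X^k=u\ast g+r$ with $\deg r<m$. Then $X^{k+1}=(X\ast u)\ast g+X\ast r$, so $X^{k+1}\bmod_\ast g=X\ast r\bmod_\ast g$. Write $r=\sum_j r_jX^j$. Then $X\ast r=\sum_j\theta(r_j)X^{j+1}$, and replacing $X^m$ by $-\sum_j g_jX^j$ shows that the coefficient vector of $X\ast r \bmod_\ast g$ is $(\theta(r_0),\dots,\theta(r_{m-1}))\,C_g$. Applying this to every row gives $R_{k+1}=R_k^{[1]}C_g$. Iterating and using Proposition~\ref{5prop13} (namely $(AB)^{[i]}=A^{[i]}B^{[i]}$ and $(A^{[i]})^{[j]}=A^{[i+j]}$) yields
\[
R_k=C_g^{[k-1]}\cdots C_g^{[1]}C_g ,
\]
and more generally $R_{k+l}=R_k^{[l]}R_l$. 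Taking $k=m$ gives the first claim $N_g=C_g^{[m-1]}\cdots C_g^{[1]}C_g$.

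\textbf{Step 2 (involution).} Apply $R_{k+l}=R_k^{[l]}R_l$ with $k=l=m$. Since $\theta^m=\mathrm{id}$, this gives $R_{2m}=N_g^{[m]}N_g=N_g^2$. Being the generator polynomial of a $[2m,m]$ $\theta$-cyclic code, $g$ right-divides $X^{2m}-1$, say $X^{2m}-1=h\ast g$. Then $X^{2m+i}-X^i=X^i\ast(X^{2m}-1)=(X^i\ast h)\ast g$, so $X^{2m+i}\bmod_\ast g=X^i\bmod_\ast g$ for every $i$. Hence $R_{2m}=R_0=I_m$, that is, $N_g^2=I_m$.

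\textbf{Step 3 (MDS).} The systematic generator matrix of ${}^\ast\langle g\rangle$ is $[-N_g\mid I_m]$, as displayed before \eqref{5eqn53}. In characteristic $2$ we have $-N_g=N_g$, and in any case negation does not affect the nonsingularity of square submatrices. By Definition~\ref{5def001} and Lemma~\ref{5l9}, the code is MDS exactly when $N_g$ is an MDS matrix.

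The main obstacle I expect is Step 1: checking carefully that reduction modulo $g$ is compatible with \emph{left} multiplication by $X$, and that the twist lands on the correct side, giving $R_k^{[1]}C_g$ rather than $C_gR_k^{[1]}$. This hinges on division being on the right, $f=u\ast g+r$, so that the left ideal ${}^\ast\langle g\rangle$ is stable under $X\ast(\cdot)$. A second point to handle cleanly is that the centrality of $X^{2m}-1$ (equivalently $|\theta|=m$ dividing $2m$) is exactly what makes $g$ a right divisor and lets the power $N_g^{[m]}$ collapse to $N_g$.
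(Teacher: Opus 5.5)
Your proof is correct and follows the paper's route. Like the paper, you identify the matrix of remainders $X^{k},\dots,X^{k+m-1} \bmod_\ast g$ with $C_g^{[k-1]}\cdots C_g^{[1]}C_g$ by induction, use $\theta^m=\mathrm{id}$ to collapse the case $k=2m$ to $N_g^2$, use $X^{2m+i}\equiv X^i \bmod_\ast g$, and read off MDS-ness from the systematic form $[-N_g\mid I_m]$. The only difference is that you run the induction as $R_{k+1}=R_k^{[1]}C_g$, whereas the paper uses $R_{i+1}=C_g^{[i]}R_i$, and your version also starts correctly from $R_0=I_m$ (the paper writes $C_g$ there) and makes explicit that $g$ right-divides $X^{2m}-1$, which the paper's step ``$X^{2m}=1$'' tacitly needs.
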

\begin{proof}
	We show by induction that for all integers $i\geq 1$, the following property $P_i$ are satisfied:
	\begin{eqnarray}\label{5eqn201}
		P_i&:&\begin{bmatrix}
			X^i ~ \mod _*g \\
			X^{i+1} ~ \mod _*g  \\
			\vdots\\
			X^{i+m-1} ~ \mod _*g  \\
		\end{bmatrix}_{m\times m}=C^{[i-1]}_g\cdot C^{[i-2]}_g\cdots C^{[1]}_g\cdot C_g.
	\end{eqnarray}
	For $i=0$ 
	\begin{eqnarray}
		P_0&:&\begin{bmatrix}
			X^0 ~ \mod _*g  \\
			X^{1} ~ \mod _*g  \\
			\vdots\\
			X^{m-1} ~ \mod _*g  \\
			
		\end{bmatrix}_{m\times m}=C_g.
	\end{eqnarray}
	Thus (\ref{5eqn201}) is true for $i=0$. Suppose our assumption is true for $1,~2,~\dots,~i-1$. Then, we have
	
	
	$$\begin{bmatrix}\label{5eqn5.8}
		X^i ~ \mod _*g  \\
		X^{i+1} ~ \mod _*g  \\
		\vdots\\
		X^{i+m-1} ~ \mod _*g  \\
	\end{bmatrix}_{m\times m}=
	C_g^{[i-1]}\cdot C_g^{[i-2]}\cdots C_g^{[1]}\cdot C_g.$$
	Now, we prove (\ref{5eqn201}) for $i$. For this, we compute \\
	\begin{eqnarray}\label{5eqn11}
		\notag&~&C^{[i]}_g C_g^{[i-1]}\cdot C_g^{[i-2]}\cdots C_g^{[1]}\cdot C_g\\
		\notag&~~~~~&	=\begin{bmatrix}
			0&1&0&\cdots&0\\
			0&0&1&\cdots&0\\
			\vdots&\vdots&\vdots&\ddots&\vdots&\\
			0&0&0&\cdots&1\\
			\theta^{i}(g_0)&\theta^{i}(g_1)&\theta^{i}(g_2)&\cdots&\theta^{i}(g_{m-1})
		\end{bmatrix}_{m\times m} \cdot \begin{bmatrix}
			X^i ~ \mod _*g  \\
			X^{i+1} ~ \mod _*g \\
			\vdots\\
			X^{i+m-1} ~ \mod _*g \\
		\end{bmatrix}_{m\times m}\\
		\notag&~&~~~~~=\begin{bmatrix}
			X^{i+1}~ \mod _*g  \\
			X^{i+2}~ \mod _*g  \\
			\vdots\\
			\theta^{i}(g_0)X^i+\cdots+\theta^{i}(g_{m-1})X^{i+m-1}~ \mod _*g  
		\end{bmatrix}_{m\times m}\\
		&~~~~~~~~&=\begin{bmatrix}
			X^{i+1}~ \mod _*g \\
			X^{i+2}~ \mod _*g  \\
			\vdots\\
			\Big(\sum_{j=0}^{m-1}\theta^{i}(g_j)X^{j+i}\Big)~ \mod _*g  \\
		\end{bmatrix}_{m\times m}.
	\end{eqnarray}
	\normalsize
	Since,
	\begin{eqnarray*}
		X^i*g_t&=&\theta^i(g_t)X^i.
	\end{eqnarray*}
	So, we have
	\begin{eqnarray}\label{5eqn12}
		~\notag&X^i&*(g_0+g_1X+g_2X^2+\cdots+g_{m-1}X^{m-1})\\
		\notag&=&\theta^i(g_0)X^i+\theta^i(g_1)X^{i+1}+\cdots+\theta^i(g_{m-1})X^{i+m-1}\\
		&=&\sum_{j=0}^{m-1}\theta^i (g_j)X^{i+j}.
	\end{eqnarray}
	From the last row of the matrix in (\ref{5eqn11}), and (\ref{5eqn12}), we obtain
	\begin{eqnarray}\label{5eqn13}
		\Big(\sum_{j=0}^{m-1}\theta^{i}(g_j)X^{j+i}\Big)~ \mod _*g  &=&X^{i+m}~ \mod _*g.
	\end{eqnarray}
	In view of  (\ref{5eqn13}) in (\ref{5eqn11}), we conclude that
	\begin{eqnarray}\label{5eqn202}
		C_g^{[i]}\cdot 	C_g^{[i-1]}\cdot C_g^{[i-2]}\cdots C_g^{[1]}\cdot C_g=\begin{bmatrix}
			X^{i+1} ~ \mod _*g  \\
			X^{i+2} ~ \mod _*g  \\
			\vdots\\
			X^{i+m} ~ \mod _*g  \\
		\end{bmatrix}_{m\times m}.
	\end{eqnarray}
	Hence, $P_i$ is true for all integer $i\geq0.$ If we take $i=2m-1$ in (\ref{5eqn202}), we obtain
	\begin{eqnarray*}
		C^{[2m-1]}_g\cdots C_g&=&\begin{bmatrix}
			X^{2m} ~ \mod _*g  \\
			X^{2m+1} ~ \mod _*g  \\
			\vdots\\
			X^{3m-1} ~ \mod _*g  \\
		\end{bmatrix}_{m\times m}.
	\end{eqnarray*}
	Since \( \theta^m = I|_{\mathbb{F}_{q^{2m}}} \), it follows that \( C_g^{[m]} = C_g \). Taking into account the Proposition \ref{5prop13}, we compute
	
	\begin{eqnarray}\label{5eqn203}
		C^{[2m-1]}_g\cdots C_g \notag&=&C^{[2m-1]}_g\cdots C^{[m]}_g C^{[m]}_g C^{[m-1]}_g\cdots C_g\\
		\notag&=&(C^{[m-1]}_g\cdots C_g)(C^{[m-1]}_g\cdots C_g)\\
		&=&N_g\cdot N_g.
	\end{eqnarray}
	From (\ref{5eqn53}) and (\ref{5eqn203}), we have 
	$$N^2_g=\begin{bmatrix}
		X^{2m} ~ \mod _*g \\
		X^{2m+1} ~ \mod _*g  \\
		\vdots\\
		X^{3m-1} ~ \mod _*g  \\
	\end{bmatrix}_{m\times m}.$$
	As \( X^{2m} = 1 \) in \( \frac{\mathbb{F}_{q^{2m}}[X;~\theta]}{(X^{2m}-1)} \), equation (\ref{5eqn203}) leads to
	\begin{eqnarray*}
		N^2_g=\begin{bmatrix}
			X^{0} ~ \mod _*g  \\
			X^{1} ~ \mod _*g  \\
			\vdots\\
			X^{m-1} ~ \mod _*g  \\
		\end{bmatrix}_{m\times m}.
	\end{eqnarray*}
	This implies that $N^2_g=I_m.$ Since $N_g$ generates an MDS code, and  is the redundant part of systematic form of an MDS code. Hence, $N_g$ is an MDS matrix.
\end{proof}
\begin{example}
	Let $\mathbb{F}_{2^4} = \mathbb{F}_2(\alpha)$, where $\alpha^4 + \alpha + 1 = 0$, and let $\theta : \mathbb{F}_{2^4} \longrightarrow \mathbb{F}_{2^4}$ be a map defined as $\theta(a) = a^4$. In light of (\ref{5eqn111}), we factorize the polynomial  
	\[
	X^4 + 1 = (X^2 + 1) * (X^2 + 1) = (X^2 + \alpha X + \alpha^3 + \alpha)*  (X^2 + \alpha X + \alpha^3 + \alpha^2).
	\]  
	Take \( g(X) = X^2 + \alpha X + \alpha^3 + \alpha^2 \). Then,  
	\[
	N_g = C^{[1]}_g \cdot C_g = 
	\begin{bmatrix}
		0 & 1 \\
		\alpha^3 + \alpha & \alpha^3+\alpha
	\end{bmatrix}_{2\times 2}
	\cdot
	\begin{bmatrix}
		0 & 1 \\
		\alpha^3 + \alpha^2 & \alpha
	\end{bmatrix}_{2\times 2}
	=
	\begin{bmatrix}
		\alpha^3 + \alpha^2 & \alpha \\
		1 + \alpha + \alpha^2 & \alpha^3 + \alpha^2
	\end{bmatrix}_{2\times 2}.
	\]  
	This matrix is an involutory MDS matrix.
\end{example}

Gupta et al. \cite[Theorem 1]{gupta2017towards} provide a key criterion for determining whether a polynomial generates a recursive MDS matrix. We extend this characterization to quasi recursive MDS matrices, generalizing the recursive MDS framework and offering new insights into diffusion matrix construction.
\begin{theorem}\label{5t201}
	Let $g  \langle X \rangle\in \mathbb{F}_{q^{2m}}[X,\theta]$ be a monic polynomial of degree $m$ with $g(0)\neq0$  $ord(g)=n\geq 2$ and $\theta:\mathbb{F}_{q^{2m}}\to \mathbb{F}_{q^{2m}}$, defined as $\theta(a)=a^{q}$. Let $E=\{0,1,\dots,m-1,t,t+1,\dots,t+m-1\}$ for some integer $t$, $m\leq t\leq n-m.$ Then, the matrix 
	$M=C^{[t-1]}_gC^{[t-2]}_g\cdots C^{[1]}C_g$ is MDS iff the weight of any nonzero right multiple  $f  \langle X \rangle$ of $g  \langle X \rangle$ of the form 
	$$f  \langle X \rangle=\sum_{e\in E}^{}f_eX^e\in \mathbb{F}_{q^{2m}}[X,\theta]$$ is greater than $m$.
\end{theorem}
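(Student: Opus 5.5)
The plan is to adapt the Gupta et al.\ argument \cite[Theorem 1]{gupta2017towards} to the skew setting. The link is the identity established inside the proof of Theorem~\ref{5t1}. Its inductive part, property $P_i$, used neither $\theta(a)=a^{q^2}$ nor $\theta^m=I$, only the rule $X^i * a=\theta^i(a)X^i$. Hence, for any integer $t\geq 1$ and any automorphism $\theta$,
\[
M=C^{[t-1]}_g\cdots C^{[1]}_gC_g=\begin{bmatrix} X^t \bmod_* g\\ X^{t+1}\bmod_* g\\ \vdots\\ X^{t+m-1}\bmod_* g\end{bmatrix}_{m\times m},
\]
where row $j$ is the coefficient vector of $X^{t+j}\bmod_* g$ in the basis $1,X,\dots,X^{m-1}$. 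Writing $X^{t+j}=q_j\langle X\rangle * g\langle X\rangle + r_j\langle X\rangle$, each polynomial $f_j=X^{t+j}-r_j$ is a left multiple $q_j*g$ of $g$, i.e.\ it has $g$ as a right divisor. The paper's phrase ``right multiple'' is meant in this sense, and I will fix this convention at the start. Each $f_j$ is supported on $E$.

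Next I identify the code. Let $\mathcal{C}$ be the $\mathbb{F}_{q^{2m}}$-span of the $f_j$, $0\le j\le m-1$. Its generator matrix is $[-M\,|\,I_m]$, with columns indexed by $0,\dots,m-1$ and then $t,\dots,t+m-1$. Conversely, I claim every $f=\sum_{e\in E}f_eX^e$ of the form $Q*g$ lies in $\mathcal{C}$. Subtracting $\sum_j f_{t+j}\,f_j$ from $f$ leaves a left multiple of $g$ of degree $<m$, which must be $0$ because $\deg(Q*g)=\deg Q+m$ in the domain $\mathbb{F}_{q^{2m}}[X;\theta]$. Here I use that left scalar multiplication preserves left multiples, since $a(Q*g)=(aQ)*g$. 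Thus $\mathcal{C}$ is exactly the set of left multiples of $g$ supported on $E$. The hypotheses $t\ge m$ make the index sets $\{0,\dots,m-1\}$ and $\{t,\dots,t+m-1\}$ disjoint, and $t\le n-m$ keeps the exponents within a range where they are not identified modulo the order relation.

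To finish, apply Lemma~\ref{5l9} together with the standard fact that a $[2m,m]$ code is MDS iff its minimum weight is $\ge m+1$, i.e.\ $>m$. The matrix $M$ is MDS iff $-M$ is (in characteristic $2$ they coincide), iff the code with generator $[-M\,|\,I_m]$ is MDS, iff every nonzero $f\in\mathcal{C}$ has weight $>m$. By the identification above, this last condition is exactly the stated one.

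The main obstacle is the converse inclusion in the second step, which requires the one-sided divisibility conventions to be handled consistently. The paper's division algorithm (\ref{5eqn111}) is $f=q*g+r$, so ``multiple'' must mean left multiple $Q*g$. I also need that $\mathcal{C}$ is an $\mathbb{F}_{q^{2m}}$-vector space under left scalar multiplication, which holds as shown above. The roles of $g(0)\neq0$ and $ord(g)=n$ are only to make the range for $t$ meaningful; the argument itself does not need $\theta(a)=a^q$ specifically.
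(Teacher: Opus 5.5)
Your proposal is correct and follows essentially the paper's route. Both arguments read $M$ off row by row as the residues $X^{t+j}\bmod_{*} g$ via property $P_t$, which, as you note, holds for any automorphism. Anchor that induction at $t=1$, where $C_g$ itself has rows $X,\dots,X^{m-1},X^m\bmod_{*} g$, rather than at the paper's misstated $P_0$. Both then apply Lemma~\ref{5l9}. The paper uses the row-independence form for $\bar G$, the identity block stacked over $M$, whose left kernel is exactly your code $\mathcal{C}$. It dismisses the link to multiples $Q*g$ supported on $E$ as ``easy to see''. Your generator-matrix and division argument, together with your explicit $Q*g$ convention, makes that step rigorous.
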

\begin{proof}
	In view of (\ref{5eqn201}), we have 
	$$N_g=C^{[t-1]}_g\cdot C^{[t-2]}_g\cdots C^{[1]}\cdot C_g=\begin{bmatrix}
		X^t~ \mod _*g  \\
		X^{t+1}~\ \mod _*g  \\
		\vdots\\
		X^{t+m-1}~\ \mod _*g  
	\end{bmatrix}_{m\times m}.$$
	The matrix $N_g$ is MDS if and only if any $m$ rows of matrix $\bar{G}$ are linearly independent, where
	$$\bar{G}=\Big[\frac{I}{N_g}\Big]=\begin{bmatrix}
		1\\X\\X^2\\
		\dots\\
		X^{m-1}\\
		X^t~ \mod _*g  \\
		X^{t+1} \mod _*g  \\
		\dots\\
		X^{t+m-1} \mod _*g  
	\end{bmatrix}_{2m\times m}.$$
	Now, it is now easy to see that the latter condition is equivalent to the weight of any nonzero multiple of the form $\geq m$. Hence, this prove the theorem. 
\end{proof}
\begin{definition}
	Let $\mathbb{F}_{q}$ be a finite field and $g  \langle X \rangle=g_0+g_1X+\cdots+g_{m-1}X^{m-1}+X^m$ be a polynomial of degree $m$ over $\mathbb{F}_{2^{2m}}$. The reciprocal polynomial of $g  \langle X \rangle$ denoted by $g_*  \langle X \rangle,$ is defined as:
	$$
	g_*  \langle X \rangle=X^m+	\frac{\theta^{m}(g_1)}{\theta^{m}(g_0)}X^{m-1}+	\frac{\theta^{m}(g_2)}{\theta^{m}(g_0)}X^{m-2}+\cdots+	\frac{\theta^{m}(g_{m-1})}{\theta^{m}(g_0)}X+	\frac{1}{\theta^{m}(g_0)}
	.$$
\end{definition}

\noindent Since it is straightforward to observe that  
\[
C_{g_*} = P(C_g^{-1})^{[m]} P, \quad \text{where} \quad P = \operatorname{circ}(0,0,\dots,1),
\]  
so we can derive the result by considering the reciprocal polynomial of \( g \langle X \rangle \).  

\begin{theorem}  
	Let \( g \langle X \rangle \in \mathbb{F}_{q^{2m}}[x,\theta] \) be a monic polynomial of degree \( m \) with \( g(0) \neq 0 \), and let \( \operatorname{ord}(g) = n \geq 2 \). Consider the map \( \theta: \mathbb{F}_{q^{2m}} \to \mathbb{F}_{q^{2m}} \) defined by \( \theta(a) = a^q \). If \( g \langle X \rangle \) generates recursively involutory MDS matrices, then its reciprocal polynomial \( g_* \langle X \rangle \) also generates recursively involutory MDS matrices.  
\end{theorem}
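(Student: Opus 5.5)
We would base the proof on the identity $C_{g_*} = P\,(C_g^{-1})^{[m]}\,P$ with $P=\operatorname{circ}(0,0,\dots,1)$, stated just before the theorem, together with the rules in Proposition~\ref{5prop13}. Here $P$ is the anti-diagonal (reversal) permutation matrix, so $P=P^{-1}$, and it has entries in $\mathbb{F}_2$, so $P^{[i]}=P$ for all $i$. The strategy is to express the quasi recursive product attached to $g_*$ as a reversal-conjugate of a $\theta$-twist of the inverse of the product attached to $g$. We then transfer the MDS and involutory properties across this transformation.

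\textbf{Step 1: the product for $g_*$.} Applying $[i]$ to the identity, and using Proposition~\ref{5prop13}(i),(ii) together with $(A^{-1})^{[i]}=(A^{[i]})^{-1}$ (which follows from (ii) applied to $AA^{-1}=I$), we get
$$C_{g_*}^{[i]}=P\,\bigl(C_g^{[m+i]}\bigr)^{-1}P.$$
Hence, for the relevant length $t$,
$$M_*:=C_{g_*}^{[t-1]}\cdots C_{g_*}^{[1]}C_{g_*}=P\bigl(C_g^{[m+t-1]}\bigr)^{-1}\cdots\bigl(C_g^{[m]}\bigr)^{-1}P=P\bigl(C_g^{[m]}\cdots C_g^{[m+t-1]}\bigr)^{-1}P.$$
The middle product is in reversed order relative to $M^{[m]}=C_g^{[m+t-1]}\cdots C_g^{[m]}$, which is the main point to control.

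\textbf{Step 2: handling the reversed order.} This is the step I expect to be the main obstacle. First I would exploit the periodicity of $\theta$. Since $g\langle X\rangle$ generates recursively involutory matrices, we are in the setting of Theorem~\ref{5t1}: the twists $C_g^{[j]}$ depend only on $j$ modulo the order of $\theta$, and the length is $t=m$ with $\theta^m=\mathrm{id}$. Under these conditions the reversed product $C_g^{[m]}\cdots C_g^{[2m-1]}$ can be identified with $C_g\,C_g^{[1]}\cdots C_g^{[m-1]}$. Next, I would use the involutory relation $N_g^2=I_m$ to rewrite the inverse. The inverse of a reversed product equals the product of the inverses in the original order, so
$$M_*=P\,(C_g^{[m-1]})^{-1}\cdots (C_g)^{-1}\,P.$$
If this does not close cleanly, the fallback is the polynomial description from Theorem~\ref{5t1}. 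In that description $N_{g_*}$ has rows $X^{m+j}\bmod_* g_*$, and the reciprocal operation corresponds to reversing coefficient order in the code ${}^*\langle g\rangle$. The target is to show that $N_{g_*}=P\,(N_g^{-1})^{[m]}\,P=P\,N_g^{[m]}\,P$, using $N_g^{-1}=N_g$.

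\textbf{Step 3: transferring the properties.} Once $M_*=P\,N_g^{[m]}\,P$ is established, both properties follow directly.
\begin{enumerate}
\item[(i)] \emph{Involutory.} By Proposition~\ref{5prop13}(ii), $M_*^2=P\,(N_g^2)^{[m]}\,P=P\,I_m\,P=I_m$.
\item[(ii)] \emph{MDS.} Applying the field automorphism $\theta^m$ entrywise commutes with determinants, so every square submatrix of $N_g^{[m]}$ has determinant $\theta^m$ of the corresponding minor of $N_g$, which is nonzero. Conjugating by the permutation $P$ only permutes rows and columns, so it preserves the set of square submatrices up to reordering. Theorem~\ref{5thm4.10}(ii), specialised to $\delta=0$, gives the same conclusion.
\end{enumerate}
Hence $g_*\langle X\rangle$ also yields a recursively involutory MDS matrix.
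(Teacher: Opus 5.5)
Your Step~1 is correct, and it is exactly the mechanism behind the paper's one-line appeal to Theorem~\ref{5thm4.10}. The gap is Step~2, and it cannot be closed in the form you aim for, because the target identity is false.

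Step~1 gives, for any order of $\theta$, $N_{g_*}=P\,\bigl(C_g^{[m]}C_g^{[m+1]}\cdots C_g^{[2m-1]}\bigr)^{-1}P$. So $N_{g_*}=P\,(N_g^{-1})^{[m]}P$ is equivalent (apply $\theta^{-m}$ entrywise) to $C_g\,C_g^{[1]}\cdots C_g^{[m-1]}=C_g^{[m-1]}\cdots C_g^{[1]}C_g=N_g$. Compare first rows. The first row of every $C_g^{[i]}$ is $(0,1,0,\dots,0)$, so the first row of the left-hand side is the last row of $C_g^{[m-1]}$, namely $\theta^{m-1}(g_0,\dots,g_{m-1})$. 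The first row of $N_g$ is $(g_0,\dots,g_{m-1})$. Hence the identity fails as soon as some $g_i$ is not fixed by $\theta^{m-1}$. In the setting of Theorem~\ref{5t1}, where $\theta^{m-1}=\theta^{-1}$, it holds only when all twists $C_g^{[i]}$ coincide.

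The paper's own example after Theorem~\ref{5t1} already breaks it. There $g_*=X^2+(\alpha^2+\alpha+1)X+\alpha^3+\alpha$, and one checks $N_{g_*}=P\,N_g^{[1]}\,P$, whose $(1,1)$-entry is $\alpha^3+\alpha$. By contrast, $P\,N_g\,P=P\,N_g^{[m]}\,P$ has $(1,1)$-entry $\alpha^3+\alpha^2$.

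The relation $N_g^2=I_m$ cannot repair this, because it concerns the product in the original order. Your expression $P\,(C_g^{[m-1]})^{-1}\cdots C_g^{-1}\,P$ is the inverse of the reversed product, not $N_g^{-1}=C_g^{-1}(C_g^{[1]})^{-1}\cdots(C_g^{[m-1]})^{-1}$. Nor does Theorem~\ref{5thm4.10}, which is all the paper's proof invokes, help: conjugating by $P$ never reorders factors.

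What the computation actually shows is that $C_g\,C_g^{[1]}\cdots C_g^{[m-1]}$ is the $\theta^{m-1}$-twist of $C_g^{[-(m-1)]}\cdots C_g^{[-1]}C_g$, where $[-k]$ means $\theta^{-k}$ applied entrywise. That is the quasi recursive matrix of the \emph{same} $g$ computed in $\mathbb{F}_{q^{2m}}[X;\theta^{-1}]$. Hence $N_{g_*}$ is MDS (resp.\ involutory) if and only if $g$ yields an MDS (resp.\ involutory) quasi recursive matrix with respect to $\theta^{-1}$. Your code-theoretic fallback hits the same wall, because reversing coordinates turns a $\theta$-cyclic code into a $\theta^{-1}$-cyclic one.

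When $\theta^2=\mathrm{id}$ (e.g.\ $m=2$ in the setting of Theorem~\ref{5t1}), this is just the hypothesis. It gives $N_{g_*}=P\,N_g^{[1]}\,P$, and your Step~3 then goes through verbatim. For $\theta$ of larger order (e.g.\ $m\ge 3$ there), nothing in the hypothesis on $N_g$ controls the $\theta^{-1}$-matrix. A genuinely new argument, or a restriction of the statement, is needed.

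Note also that you silently imported $\theta^m=\mathrm{id}$ from Theorem~\ref{5t1}. The statement instead takes $\theta(a)=a^q$ on $\mathbb{F}_{q^{2m}}$, which has order $2m$.
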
  

\begin{proof}
	The proof is supported by the evidence presented in Theorem \ref{5thm4.10}. 
\end{proof}

To better understand the relationship between new MDS matrices and existing structures, we first introduce the Hadamard product. This operation will play a crucial role in our proof of results related to new MDS matrices derived from established ones.
\begin{definition}\cite[Definition 1.3]{Render1995Algebras}
	Let \( P  \langle X \rangle \) and \( Q  \langle X \rangle \) be two polynomials in $\mathbb{F}_q[X;~\theta]$ such that
	\[
	P  \langle X \rangle = \sum_{i=0}^{m} a_i X^i,
	\]
	\text{and}
	\[
	Q  \langle X \rangle = \sum_{i=0}^{n} b_i X^i.
	\]
	
	\noindent Then, the Hadamard product \( h\langle X \rangle \) of \( P  \langle X \rangle \) and \( Q  \langle X \rangle \) is given by
	\[
	h\langle X \rangle = P  \langle X \rangle \diamond Q  \langle X \rangle,
	\]
	where \( \diamond \) denotes the Hadamard product and the polynomial \( h\langle X \rangle \) is defined by its coefficients as follows:
	\[
	h\langle X \rangle = \sum_{i=0}^{\min(m,n)} c_i X^i,
	\]
	where
	\[
	c_i = a_i b_i \text{ for } i = 0, 1, \ldots, \min(m,n).
	\]

	
	\noindent The Hadamard \( s \)-power of \( P  \langle X \rangle \) is denoted by \( P  \langle X \rangle^{\diamond s} \), and  defined as
	\[
	P  \langle X \rangle^{\diamond s} = \sum_{i=0}^{m} (a_i)^s X^i.
	\]
	
\end{definition}
\begin{theorem}\label{5t11}
	Let $g  \langle X \rangle$ be a polynomial of degree $m$ over $\mathbb{F}_{q^{2m}}[X,\theta]$ and suppose $g  \langle X \rangle$ gives a quasi recursive MDS matrix. Then, $(g  \langle X \rangle)^{\diamond2^t}$ also gives a quasi recursive MDS matrix, where $2^t\leq q^{2m}-1.$
\end{theorem}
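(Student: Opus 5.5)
The plan is to use the Frobenius map. Set $\phi(a)=a^{2^t}$ on $\mathbb{F}_{q^{2m}}$. Since $\mathbb{F}_{q^{2m}}$ has characteristic $2$, $\phi$ is a field automorphism. Because $\theta$ is a power of Frobenius, $\phi$ commutes with $\theta$: $\phi\circ\theta=\theta\circ\phi$. The Hadamard power $g\langle X\rangle^{\diamond 2^t}$ is then exactly the polynomial obtained by applying $\phi$ to every coefficient of $g\langle X\rangle$; write it $\phi(g)$. Note that $\phi(g)$ is still monic of degree $m$, since $1^{2^t}=1$. The condition $2^t\le q^{2m}-1$ only ensures that we are looking at a genuine power and not a trivial wrap-around; any $t$ works for the argument.

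\textbf{Step 1 (companion matrices).} Applying $\phi$ entrywise to a matrix $A$ gives a matrix we denote $\phi(A)$. Directly from the definition of $C_g$ (and $-1=1$ in characteristic $2$), we get $C_{\phi(g)}=\phi(C_g)$. Since $\phi$ commutes with $\theta^i$, this gives
\[
C_{\phi(g)}^{[i]}=\phi\big(C_g^{[i]}\big)\qquad\text{for every } i\ge 0.
\]
\textbf{Step 2 (products).} Because $\phi$ is a ring homomorphism, entrywise application of $\phi$ commutes with matrix products, exactly as in Proposition~\ref{5prop13}(ii) with $\phi$ in place of $\theta^i$. Hence
\[
C_{\phi(g)}^{[r-1]}\cdots C_{\phi(g)}^{[1]}C_{\phi(g)}=\phi\big(C_g^{[r-1]}\cdots C_g^{[1]}C_g\big)=\phi(M),
\]
where $M$ is the quasi recursive MDS matrix produced by $g\langle X\rangle$.

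\textbf{Step 3 (MDS is preserved).} Let $S$ be any square submatrix of $\phi(M)$. It equals $\phi(S')$ for the corresponding submatrix $S'$ of $M$. The determinant is a polynomial with integer coefficients in the entries, so $\det\phi(S')=\phi(\det S')$. Since $\phi$ is injective, $\phi(\det S')\neq 0$ whenever $\det S'\neq0$. Therefore every square submatrix of $\phi(M)$ is nonsingular, so $\phi(M)$ is MDS and $g\langle X\rangle^{\diamond 2^t}$ yields a quasi recursive MDS matrix with the same $r$.

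\textbf{Main obstacle.} The only delicate point is Step 1: checking that raising coefficients to the $2^t$ power is compatible with the twisting $[i]$. This rests on the commutation $\phi\theta=\theta\phi$, which holds because both maps are powers of Frobenius. If Theorem~\ref{5t1} is invoked, we may also note that $\phi(g)$ is still a right divisor of $X^{2m}-1$ in $\mathbb{F}_{q^{2m}}[X;\theta]$. To see this, apply $\phi$ to a factorization $X^{2m}-1=h*g$; since $\phi$ commutes with $\theta$, it induces a ring automorphism of $\mathbb{F}_{q^{2m}}[X;\theta]$ fixing $X$. This shows the structural hypotheses carry over as well.
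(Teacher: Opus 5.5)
Your proposal is correct and takes the same route as the paper, which likewise argues that every $s\times s$ minor of $N_h$ (with $h=g^{\diamond 2^t}$) is the $2^t$-th power of the corresponding minor of $N_g$, and is therefore nonzero. Your write-up is more complete than the paper's: you justify that minor identity through $N_{\phi(g)}=\phi(N_g)$, using that $\phi(a)=a^{2^t}$ commutes with $\theta$ and with matrix products, whereas the paper leaves this step implicit.
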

\begin{proof}
	Let $g  \langle X \rangle=g_0+g_1X+\cdots+g_{m-1}X^{m-1}+X^m$ $\in$ $\mathbb{F}_{q^{2m}}[X,\theta]$  gives quasi recursive MDS matrix. Then, this implies that 
	\begin{eqnarray}
		N_g&=&C_g^{[m-1]}\cdot C_g^{[m-2]}\cdots C_g^{[1]}\cdot C_g,
	\end{eqnarray}
	is an MDS matrix. Take a Hadamard product of $g  \langle X \rangle$ for $2^t\leq q^{2m}-1,$
	$$h\langle X \rangle=(g  \langle X \rangle)^{\diamond^{2^t}}=g^{2^t}_0+g^{2^t}_1X+g^{2^t}_2X^2+\cdots+g^{2^t}_{m-1}X^{m-1}+X^m.$$
	Suppose
	\begin{eqnarray*}
		N_g&=&C_g^{[m-1]}\cdot C_g^{[m-2]}\cdots C_g^{[1]}\cdot C_g\\
		&=&\begin{bmatrix}
			a_{11}&a_{12}&\cdots&a_{1m}\\
			a_{21}&a_{22}&\cdots&a_{2m}\\
			\vdots&\vdots&\ddots&\vdots\\
			a_{m1}&a_{m2}&\cdots&a_{mm}\\
		\end{bmatrix}_{m\times m}.
	\end{eqnarray*}
	Let quasi recursive matrix defined by the polynomial $h\langle X \rangle$ be
	\begin{eqnarray*}
		N_h&=&C_h^{[m-1]}\cdot C_h^{[m-2]}\cdots C_h^{[1]}\cdot C_h\\
		&=&\begin{bmatrix}
			0&1&\cdots&0\\
			0&0&\cdots&0\\
			\vdots&\vdots&\ddots&\vdots\\
			0&0&\cdots&1\\
			\theta^{m-1}(g^{2^t}_0)&\theta^{m-1}(g^{2^t}_1)&\cdots&\theta^{m-1}(g^{2^t}_{m-1})
		\end{bmatrix}_{m\times m}\cdot\begin{bmatrix}
			0&1&\cdots&0\\
			0&0&\cdots&0\\
			\vdots&\vdots&\ddots&\vdots\\
			0&0&\cdots&1\\
			\theta^{m-2}(g^{2^t}_0)&\theta^{m-2}(g^{2^t}_1)&\cdots&\theta^{m-2}(g^{2^t}_{m-1})
		\end{bmatrix}_{m\times m}\\
		&~&\cdots
		\begin{bmatrix}
			0&1&\cdots&0\\
			0&0&\cdots&0\\
			\vdots&\vdots&\cdots&\vdots\\
			0&0&\ddots&1\\
			\theta(g^{2^t}_0)&\theta(g^{2^t}_1)&\cdots&\theta(g^{2^t}_{m-1})
		\end{bmatrix}_{m\times m}\cdot \begin{bmatrix}
			0&1&\cdots&0\\
			0&0&\cdots&0\\
			\vdots&\vdots&\cdots&\vdots\\
			0&0&\ddots&1\\
			g^{2^t}_0&g^{2^t}_1&\cdots&g^{2^t}_{m-1}
		\end{bmatrix}_{m\times m}.
	\end{eqnarray*}
	Take any submatrix of $N_h$ of order $s$, say $N_{h_s}$
	$$N_{h_s}=\begin{bmatrix}
		a'_{i_1j_1}&a'_{i_1j_2}&\cdots&a'_{i_1j_s}\\
		a'_{i_2j_1}&a'_{i_2j_2}&\cdots&a'_{i_2j_s}\\
		\vdots&\vdots&\ddots&\vdots\\
		a'_{i_sj_1}&a'_{i_sj_2}&\cdots&a'_{i_sj_s}\\
	\end{bmatrix}_{m\times m}.$$
	The matrix $N_{h_s}$ corresponds to some submatrix of $N_g$ of order $s$, say $N_{g_s}$ such that
	$$\det(N_{h_s})=(\det(N_{g_s}))^{2^t}.$$ Since $N_{g}$ is an MDS matrix, the above relation implies that $\det(N_{h_s})\neq 0.$ Hence, $N_h$ is an MDS matrix.
\end{proof}
If we take  $\theta=I|_{\mathbb{F}_{2^{2m}}}$ in Theorem \ref{5t11}, then we obtain the following corollary:
\begin{corollary}\label{5cor5}
	Let $g  \langle X \rangle$ be a polynomial over $\mathbb{F}_{q^m}[X]$ and suppose $g  \langle X \rangle$ gives a recursive MDS matrix$,i.e.,$ $C^m_g$ is MDS. Then, $(g  \langle X \rangle)^{\diamond2^t}$ gives quasi recursive MDS matrix for  $2^t\leq q^m-1.$
\end{corollary}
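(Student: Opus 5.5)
The plan is to specialize the argument of Theorem~\ref{5t11} to $\theta=I|_{\mathbb{F}_{q^m}}$, so that $C_g^{[i]}=C_g$ for all $i$. The quasi recursive matrix attached to $g\langle X\rangle$ is then just $C_g^m$, and the hypothesis says that $C_g^m$ is MDS. The key observation is that the map
\[
\varphi:\mathbb{F}_{q^m}\to\mathbb{F}_{q^m},\qquad \varphi(a)=a^{2^t},
\]
is a power of the Frobenius map. Since $q$ is a power of $2$, $\varphi$ is a field automorphism, and in particular it is injective. The bound $2^t\le q^m-1$ is only needed to make the exponent meaningful as a Hadamard power; it plays no role in the argument.

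\textbf{Step 1.} Write $h\langle X\rangle=(g\langle X\rangle)^{\diamond 2^t}$ and compare $C_h$ with $C_g$ entrywise. The entries $0$ and $1$ are fixed by $\varphi$. Also $-g_i^{2^t}=\varphi(-g_i)$, which holds trivially in characteristic $2$ and in any case because $\varphi$ is additive. Hence $C_h=\varphi(C_g)$, where $\varphi(\cdot)$ denotes entrywise application.

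\textbf{Step 2.} Show that entrywise application of $\varphi$ commutes with matrix multiplication. This is exactly the argument of Proposition~\ref{5prop13}(ii) with $\theta$ replaced by the ring homomorphism $\varphi$: each entry of a product is a sum of products of entries, and $\varphi$ respects both operations. It follows that $C_h^m=\varphi(C_g)^m=\varphi(C_g^m)$.

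\textbf{Step 3.} Check the MDS property. Any $s\times s$ submatrix of $C_h^m$ is $\varphi$ applied to the corresponding submatrix $S$ of $C_g^m$. Because the determinant is a polynomial in the entries with integer coefficients,
\[
\det\varphi(S)=\varphi(\det S)=(\det S)^{2^t}.
\]
Since $C_g^m$ is MDS, $\det S\neq 0$, and injectivity of $\varphi$ gives $\det\varphi(S)\neq0$. Therefore every square submatrix of $C_h^m$ is nonsingular, so $C_h^m$ is MDS. Equivalently, $h\langle X\rangle$ gives a (quasi) recursive MDS matrix.

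The main obstacle is Step 2: the multiplicativity of the entrywise map. Here it is unproblematic because $\varphi$ is a genuine ring automorphism, in contrast with the $\langle i\rangle$-operation, which Example~\ref{5example001} shows is not multiplicative. The rest of the argument is a direct specialization of the determinant identity already used in the proof of Theorem~\ref{5t11}.
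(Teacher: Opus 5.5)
Your proof is correct and follows the paper's route. The paper obtains the corollary by setting $\theta=I$ in Theorem~\ref{5t11}, whose proof rests on the identity $\det(N_{h_s})=(\det N_{g_s})^{2^t}$ for corresponding submatrices. You also supply the justification the paper leaves unstated: $C_h$ is the entrywise image of $C_g$ under the ring automorphism $a\mapsto a^{2^t}$, and this entrywise map commutes with matrix products and determinants, which is exactly why that identity holds.
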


The Hadamard product allows us to construct many different quasi recursive MDS matrices while preserving the MDS property. 
This provides greater flexibility in selecting diffusion matrices with desirable structure and efficient implementation over $\mathbb{F}_q$. To illustrate this idea, we present the following example, which demonstrates how new quasi recursive MDS matrices can be obtained via the Hadamard product.

\begin{example}
	Consider the finite field $\mathbb{F}_{2^8}=\frac{\mathbb{F}_2[X]}{( \mu  ( X ))}$, where $\mu  \langle X \rangle=X^8+X^4+X^3+X^2+1$ is a primitive polynomial over $\mathbb{F}_2$. Next, let $\alpha\in \mathbb{F}_{2^8}$ be a root of $\mu( X)$, i.e., a primitive element of $\mathbb{F}_{2^8}$. Suppose a polynomial $g  ( X)=X^4+\alpha^{76}X^3+\alpha^{251}X^2+\alpha^{81}X+\alpha^{10}.$ Then $C_g=Companion(\alpha^{10},\alpha^{81},\alpha^{251},\alpha^{76})$ and $C^4_{g}$ is an MDS matrix by \cite[Example 2]{gupta2017towards}. By using Corollary \ref{5cor5}, we found 7 polynomials with the help of Hadamard product of $g(X)$, which yields 7 new recursive MDS matrices. These seven polynomials are:
	\begin{eqnarray*}
		g_1  ( X )=g  ( X )^{\diamond2^1}&=& X^4+\alpha^{152}X^3+\alpha^{247}X^2+\alpha^{162}X+\alpha^{20},\\
		g_2  ( X )=g  ( X )^{\diamond2^2}&=& X^4+\alpha^{49}X^3+\alpha^{239}X^2+\alpha^{69}X+\alpha^{40},\\
		g_3 ( X )=g  ( X )^{\diamond2^3}&=& x^4+\alpha^{98}X^3+\alpha^{223}X^2+\alpha^{138}X+\alpha^{80},\\
		g_4 ( X )=g  ( X )^{\diamond2^4}&=& X^4+\alpha^{196}X^3+\alpha^{191}X^2+\alpha^{138}X+\alpha^{160},\\
		g_5  ( X )=g  ( X )^{\diamond2^5}&=& X^4+\alpha^{137}X^3+\alpha^{127}X^2+\alpha^{21}X+\alpha^{65},\\	
		g_6  ( X )=g  ( X )^{\diamond2^6}&=& X^4+\alpha^{19}X^3+\alpha^{254}X^2+\alpha^{42}X+\alpha^{130},\\
		g_7 ( X )=g  ( X )^{\diamond2^7}&=& X^4+\alpha^{38}X^3+\alpha^{253}X^2+\alpha^{84}X+\alpha^{5},
	\end{eqnarray*}
	and the recursive MDS matrices associated with above polynomials are as follows:
	\begin{eqnarray*}
		C_{g_1}=\begin{bmatrix}
			\alpha^{20}&\alpha^{162} &\alpha^{247} &\alpha^{152} \\ 
			\alpha^{172}&\alpha^{126} &\alpha^{129} &\alpha^{75} \\
			\alpha^{95}&\alpha^{79} &\alpha^{149} &\alpha^{13} \\ 
			\alpha^{33}&\alpha^{8} &\alpha^{108} &\alpha^{39} \\ 
		\end{bmatrix}_{4\times 4}&,&	C_{g_2}=\begin{bmatrix}
			\alpha^{40}&\alpha^{69} &\alpha^{239} &\alpha^{49} \\ 
			\alpha^{89}&\alpha^{252} &\alpha^{3} &\alpha^{150} \\
			\alpha^{190}&\alpha^{158} &\alpha^{43} &\alpha^{26} \\ 
			\alpha^{66}&\alpha^{16} &\alpha^{216} &\alpha^{78} \\ 
		\end{bmatrix}_{4\times 4},\\
		C_{g_3}=\begin{bmatrix}
			\alpha^{80}&\alpha^{138} &\alpha^{223} &\alpha^{98} \\ 
			\alpha^{178}&\alpha^{249} &\alpha^{6} &\alpha^{45} \\
			\alpha^{125}&\alpha^{61} &\alpha^{86} &\alpha^{52} \\ 
			\alpha^{32}&\alpha^{32} &\alpha^{177} &\alpha^{156} \\ 
		\end{bmatrix}_{4\times 4}&,&	C_{g_4}=\begin{bmatrix}
			\alpha^{160}&\alpha^{21} &\alpha^{191} &\alpha^{196} \\ 
			\alpha^{101}&\alpha^{243} &\alpha^{12} &\alpha^{90} \\
			\alpha^{250}&\alpha^{79} &\alpha^{149} &\alpha^{13} \\ 
			\alpha^{33}&\alpha^{8} &\alpha^{108} &\alpha^{39} \\ 
		\end{bmatrix}_{4\times 4},\\
		C_{g_5}=\begin{bmatrix}
			\alpha^{65}&\alpha^{42} &\alpha^{127} &\alpha^{137} \\ 
			\alpha^{202}&\alpha^{231} &\alpha^{24} &\alpha^{180} \\
			\alpha^{245}&\alpha^{244} &\alpha^{89} &\alpha^{208} \\ 
			\alpha^{18}&\alpha^{128} &\alpha^{198} &\alpha^{114} \\ 
		\end{bmatrix}_{4\times 4}&,&	C_{g_6}=\begin{bmatrix}
			\alpha^{130}&\alpha^{84} &\alpha^{254} &\alpha^{19} \\ 
			\alpha^{149}&\alpha^{207} &\alpha^{48} &\alpha^{105} \\
			\alpha^{235}&\alpha^{233} &\alpha^{178} &\alpha^{161} \\ 
			\alpha^{36}&\alpha^{1} &\alpha^{141} &\alpha^{228} \\ 
		\end{bmatrix}_{4\times 4}\\
		C_{g_7}&=&\begin{bmatrix}
			\alpha^{5}&\alpha^{168} &\alpha^{253} &\alpha^{38} \\ 
			\alpha^{43}&\alpha^{159} &\alpha^{96} &\alpha^{210} \\
			\alpha^{215}&\alpha^{211} &\alpha^{101} &\alpha^{67} \\ 
			\alpha^{72}&\alpha^{2} &\alpha^{27} &\alpha^{201} \\ 
		\end{bmatrix}_{4\times 4}.\\
	\end{eqnarray*}
\end{example}

\section{\textbf{Conclusion and future work}}

The construction of $\delta_{\theta}$-circulant MDS matrices and quasi recursive MDS matrices studied with the help of skew polynomial ring. Some examples of such constructions are presented. Note that if we take $\delta=0$ and $\theta=I_{\mathbb{F}_q},$ then the matrix we have obtained cover those obtained in \cite{cauchois2019circulant} and \cite{cauchois2016direct}, respectively. An important requirement relevant to the involutory MDS matrices, we obtain quasi recursive MDS matrices which is involutory in Theorem \ref{5t1}.

Although quasi recursive MDS matrices are well known, our choice of the automorphism parameter yields an involutory MDS matrix, leading to a computationally efficient inverse. This motivates further study of structured MDS matrices; in particular, the construction of quasi recursive MDS matrices via $\theta$-derivations remains open. We also leave a complete characterization of MDS $\delta_{\theta}$-circulant matrices as an open problem (see Open Problem~\ref{op:dtheta}). Moreover, it would be worthwhile to investigate whether $\delta_{\theta}$-circulant matrices possess orthogonality properties. Additionally, future work may explore whether these constructions can be extended to produce semi-involutory or semi-orthogonal MDS matrices.


\section{\bf Declarations}

\noindent \textbf{Funding}\newline
This research did not receive any specific grant from funding agencies in the public, commercial, or not-for-profit sectors.\newline

\noindent \textbf{Data Availability Statement}\newline
Data sharing is not applicable to this article as no data sets were
generated or analyzed during the current study.\newline

\noindent \textbf{Conflicts of Interest}\newline
The authors declare that they have no conflicts of interest. \newline



\end{document}